\documentclass[journal,twocolumn,10pt]{IEEEtran}
\usepackage{graphicx}
\usepackage{epstopdf}
\usepackage{amsmath}
\usepackage{amssymb}
\usepackage{amsfonts}
\usepackage{mathrsfs}
\usepackage{amsthm}
\usepackage{bbm}
\usepackage{multirow}
\usepackage{subfig}
\usepackage{chgbar}
\chgbarwidth 0.5pt
\chgbarsep .3in

\newtheorem{lem}{Lemma}[section]

\linespread{1}
\pagestyle{empty}

\begin{document}
\title{Physical Layer Security of Generalised Pre-coded Spatial Modulation with Antenna Scrambling}

% author names
\author{Rong Zhang, Lie-Liang Yang and Lajos Hanzo \\
Communications, Signal Processing and Control, School of ECS, University of Southampton, SO17 1BJ, UK \\
Email: {rz,lly,lh}@ecs.soton.ac.uk, http://www-mobile.ecs.soton.ac.uk\protect
\thanks{{The financial support of the EPSRC under the India-UK Advanced Technology Centre (IU-ATC), that of the EU under the Concerto project as well as that of the European Research Council's (ERC) Advanced Fellow Grant is gratefully acknowledged.}}
}

% the paper headers
\markboth{}{Physical Layer Security of Generalised Pre-coded Spatial Modulation with Antenna Pattern Confusion}

% make the title area
\maketitle
%\thispagestyle{empty}

%\begin{abstract}
% 
%\end{abstract}

\section{Introduction}
In the last decade or so, physical layer security had been an active research area drawing substantial attentions and collaborative efforts across different societies. Complementary to the conventional application layer security approach dominantly based on cryptography that is potentially vulnerable to quantum computationally capable eavesdropper, physical layer security was envisioned of providing the ultimate security from the information theoretical point of view. This area of research was pioneered by Shannon's foundation work on secrete communications~\cite{6769090}, where the 'one-time pad' was proposed in order to achieve the perfect security. This landmark finding inspired further intriguing research on the physical layer security of 'wire-tap' channel~\cite{6772207}, where the eavesdropper's channel between Alice and Eve constitutes a degraded version of the main channel between Alice and Bob. In particular, the important notion of security capacity was formally defined as the capacity difference between the main channel and the eavesdropper's channel, where a positive security capacity enables physical layer security. 

The research of physical layer security became much more complicated yet interesting when wireless communications is considered, where the eavesdropper's channel becomes not necessarily degraded~\cite{1055892,4529282,4626059} in comparison to the main channel owing to the fast variations of the wireless channel propagation. As a benefit, the classic solution exploited the fast varying Channel State Information (CSI) of the main channel as the security key~\cite{1255548}, which was assumed to be only \textit{privately} known between Alice and Bob. This solution became naturally inappropriate in the graver condition, where the CSI of the main channel became also \textit{publicly} known to Eve. In this challenging scenario, various signal processing advances appeared particularly in the so-called Multiple Input Multiple Output Multiple Eavesdropper (MIMOME) physical layer security model~\cite{5485016,5605343}. More explicitly, when eavesdropper's channel is optimistically assumed to be known to Alice having multiple antennas, intelligent beam-forming constituted effective solutions. On the other hand, when eavesdropper's channel is unknown to Alice but its dimension (number of antennas) is known, the technique of generating artificial noise~\cite{4543070} by Alice was found to be powerful, where the artificial noise was pre-processed to lie in the null space of the main channel so as to only affect Eve adversely~\footnote{Similarly, exploiting helper to generate artificial interference towards Eve in the cooperative communications scenario is also found effective~\cite{4608977,5673779}.}. However, one common limitation of the above two distinct approaches was the stringent \textit{dimensionality} requirement, where Alice must have more antennas than Eve.

In contrast to the above two design philosophies, we propose a physical layer security solution, without the necessity of obeying the dimensionality requirement, based on our previously proposed Generalised Pre-coded Spatial Modulation (GPSM) scheme~\cite{5956573,6644231}. Let us now briefly introduce the background of GPSM scheme before embarking on our novel design. Spatial Modulation (SM) constituted novel Multiple Input Multiple Output (MIMO) technique, which conveys extra information by appropriately activating the \textit{Transmit} Antenna (TA) indices, in addition to the classic modulation schemes used on each activated TAs, as detailed in~\cite{4382913,6678765}. By contrast, the GPSM scheme is capable of conveying extra information by appropriately selecting the \textit{Receive} Antenna (RA) indices in addition to the classic modulation schemes loaded on each activated RAs, as detailed in~\cite{6644231}. To elaborate a little more, in~\cite{6644231}, comprehensive performance comparisons were carried out between the GPSM scheme as well as the conventional MIMO scheme and the associated detection complexity issues were discussed. Furthermore, a range of practical issues were investigated, namely the detrimental effects of realistic imperfect Channel State Information at the Transmitter (CSIT), followed by a low-rank approximation invoked for large-dimensional MIMOs. As a further exploitation, error probability, achievable rate and capacity analysis were conducted in~\cite{Zhang}. When physical layer security is considered, early work on SM considered its straightforward use, where the security capacity is firstly reported under various SNRs~\cite{6399113} and then compared to that of the single input single output fading channel~\cite{5956508}. %Recently, we also investigated the physical layer security of pre-coded SM scheme with one RA being activated as a particular case under the less-realistic assumption that eavesdropper's channel is known at Alice, which is removed in this paper.

We now advocate a novel physical layer security solution that is unique to our previously proposed GPSM scheme with the aid of the proposed \textit{antenna scrambling}. The novelty and contribution of our paper lies in three aspects:
\begin{itemize}
\item \textit{principle}: we introduce a 'security key' generated at Alice that is \textit{unknown} to both Bob and Eve, where the design goal is that the publicly unknown security key only imposes barrier for Eve.
\item \textit{approach}: we achieve it by conveying useful information only through the activation of RA indices, which is in turn concealed by the unknown security key in terms of the randomly scrambled symbols used in place of the conventional modulated symbols in GPSM scheme. 
\item \textit{design}: we consider both Circular Antenna Scrambling (CAS) and Gaussian Antenna Scrambling (GAS) in detail and the resultant security capacity of both designs are quantified and compared. 
\end{itemize}
As a design merit, we find that the proposed physical layer security solution imposes great challenge on Eve and provides positive security capacity under both the conventional setting of having less antennas at Eve than at Alice as well as the disadvantageous setting that Alice is less equipped than Eve, which is largely neglected in the literature.

The organisation of the paper is as follows. In Section~\ref{sec_systemmodel}, we introduce the physical layer security model considered, along with the brief introduction of GPSM for self completeness in addition to the straightforward use of GPSM scheme in terms of physical layer security. This motivates our proposed antenna scrambling in Section~\ref{sec_antennascrambling}, where both CAS and GAS are discussed in detail together with their security capacity expressions. We present numerical results in Section~\ref{sec_nr} and conclude in Section~\ref{sec_con}.

\section{Physical Layer Security of GPSM}
\label{sec_systemmodel}
Consider a typical physical layer security scenario with $N_t$ TAs at Alice and $N_r$ RAs at Bob in the presence of $N_e$ RAs at Eve, where we assume $N_t \geq N_r$ with no constraints put on $N_e$. Throughout the paper, we further assume that the MIMO channel between Alice and Bob $\pmb{H}_{B} \in \mathbb{C}^{N_r\times N_t}$ is known between each other and is also publicly available at Eve, while the MIMO channel between Alice and Eve $\pmb{H}_{E} \in \mathbb{C}^{N_e\times N_t}$ is only known at Eve, where both $\pmb{H}_{B}$ and $\pmb{H}_{E}$ are frequency-flat Rayleigh fading with each entry being independent. In this MIMOME set-up, we have the following typical system of equations constituted by the coexistence of legal and illegal communications
\begin{align}
\pmb{y}_B &= \pmb{H}_B\pmb{x}+\pmb{w}_B, \label{eq_bob}\\
\pmb{y}_E &= \pmb{H}_E\pmb{x}+\pmb{w}_E, \label{eq_eve}
\end{align}
where $\pmb{w}_B \in \mathbb{C}^{N_r\times 1}$ and $\pmb{w}_E \in \mathbb{C}^{N_e\times 1}$ are the circularly symmetric complex Gaussian noise vector with each entry having a zero mean and a variance of $\sigma^2_B$ and $\sigma^2_E$, i.e. we have $\mathbb{E}[||\pmb{w}_B||^2] = N_r\sigma^2_B$ and $\mathbb{E}[||\pmb{w}_E||^2] = N_e\sigma^2_E$. The noise $\pmb{w}_B$ and $\pmb{w}_E$ are independent between each other and are also independent from the transmit signal $\pmb{x} \in \mathbb{C}^{N_t\times 1}$. Finally, the security capacity $C_S$ of the above MIMOME set-up is defined as the capacity difference between Alice and Bob $C_B$ and Alice and Eve $C_E$, which is formerly expressed as
\begin{align}
C_S &= C_B - C_E.
\label{eq_sc}
\end{align}

\subsection{Transceiver Structure of GPSM}
\label{sec_tagpsm}
Let us now elaborate more on our previously proposed GPSM scheme.

\subsubsection{Conceptual Description}
In our GPSM scheme, a total of $N_a < N_r$ RAs are activated at Bob so as to facilitate the simultaneous transmission of $N_a$ data streams between Alice and Bob, where the particular \textit{pattern} of the $N_a$ RAs activated conveys extra information in form of so-called spatial symbols in addition to the information carried by the conventional modulated symbols mapped to them. Hence, the number of bits in GPSM conveyed by a spatial symbol becomes $k_{ant} = \lfloor\log_2(|\mathcal{C}_t|)\rfloor$, where the set $\mathcal{C}_t$ contains all the combinations associated with choosing $N_a$ activated RAs out of $N_r$ RAs. As a result, the total number of bits transmitted by the GPSM scheme is $k_{eff} = k_{ant}+N_ak_{mod}$, where $k_{mod} = \log_2(M)$ denotes the number of bits per symbol of a conventional $M$-ary modulation scheme and its alphabet is denoted by $\mathcal{A}$. Finally, it is plausible that the conventional MIMO scheme between Alice and Bob with multiplexing of $N_a = N_r$ data streams conveys a total of $k_{eff} = N_rk_{mod}$ bits altogether. For assisting further discussions, we also let $\mathcal{C} \subset \mathcal{C}_t$ be the set of selected activation patterns, $\mathcal{C}(k)$ and $\mathcal{C}(k,i)$ denote the $k$th RA activation pattern and the $i$th activated RA in the $k$th activation pattern, respectively. 

\subsubsection{Mathematical Description}
Let $\pmb{s}^k_m$ be an \textit{explicit} representation of a so-called super-symbol $\pmb{s} \in \mathbb{C}^{N_r\times 1}$, indicating that the RA pattern $k$ at Bob is activated and $N_a$ conventional modulated symbols $\pmb{b}_m = [b_{m_1},\ldots,b_{m_{N_a}}]^T \in \mathbb{C}^{N_a\times 1}$ are transmitted, where we have $b_{m_i} \in \mathcal{A}$ and $\mathbb{E}[|b_{m_i}|^2] = 1, \forall i \in [1,N_a]$. In other words, we have the relationship 
\begin{align}
\pmb{s} &= \pmb{s}^k_m = \pmb{\Omega}_k \pmb{b}_m,
\label{eq_txs}
\end{align}
where $\pmb{\Omega}_k = \pmb{I}[:,\mathcal{C}(k)]$ is constituted by the specifically selected columns determined by $\mathcal{C}(k)$ of an identity matrix of $\pmb{I}_{N_r}$. Following pre-coding of $\pmb{P} \in \mathbb{C}^{N_t\times N_r}$, the resultant transmit signal $\pmb{x}$ from Alice may be written as 
\begin{align}
\pmb{x} &= \sqrt{\beta / N_a} \pmb{P} \pmb{s},
\label{eq_txx}
\end{align}
where $\beta$ is designed for maintaining $\mathbb{E}[||\pmb{x}||^2] = 1$. The conventional design of GPSM allows no energy leaks into the unintended RA patterns at Bob. Hence, the classic linear Channel Inversion (CI)-based pre-coding~\cite{1391204,1261332} may be used, which is formulated as
\begin{align}
\pmb{P} &= \pmb{H}_B^H (\pmb{H}_B \pmb{H}_B^H)^{-1},
\label{eq_ci}
\end{align}
where the power-normalisation factor of the output power after pre-coding is $\beta = N_r/\mathrm{Tr}[(\pmb{H}_B \pmb{H}_B^H)^{-1}]$. 

As a result, the signal observed at the $N_r$ RAs of Bob may be written as 
\begin{align}
\pmb{y}_B &= \sqrt{\beta / N_a} \pmb{H}_B \pmb{P} \pmb{s}^k_m+\pmb{w}_B.
\label{eq_rx}
\end{align}
Consequently, the joint detection of both the conventional modulated symbols $\pmb{b}_m$ and of the spatial symbol $k$ obeys the Maximum Likelihood (ML) criterion, which is formulated as
\begin{align}
[\hat{m}_1, \ldots, \hat{m}_{N_a},\hat{k}] &= \arg \min_{\pmb{s}^\ell_n \in \mathcal{S}} ||\pmb{y}_B - \sqrt{\frac{\beta}{N_a}} \pmb{H}_B \pmb{P}\pmb{s}^\ell_n||^2 ,
\label{eq_joint}
\end{align}
where $\mathcal{S} = \mathcal{C} \times \mathcal{A}^{N_a}$ denotes the super-alphabet. Alternatively, decoupled or separate detection may also be employed, which treats the detection of the conventional modulated symbols $\pmb{b}_m$ and the spatial symbol $k$ separately. In this reduced-complexity variant, we have
\begin{align}
\hat{k} &= \arg \max_{\ell \in [1,|\mathcal{C}|]} \sum^{N_a}_{i=1}|y^B_{\mathcal{C}(\ell,i)}|^2 , \label{eq_seperate1} \\
\hat{m}_i &= \arg \min_{n_i \in [1,M]} |y^B_{\mathcal{C}(\hat{k},i)} - \sqrt{\frac{\beta}{N_a}} \pmb{h}^B_{\mathcal{C}(\hat{k},i)} \pmb{p}_{\mathcal{C}(\hat{k},i)} b_{n_i}|^2,
\label{eq_seperate2}
\end{align}
where $\pmb{h}^B_{\mathcal{C}(\hat{k},i)}$ is the $\mathcal{C}(\hat{k},i)$th row of $\pmb{H}_B$ representing the channel between the $\mathcal{C}(\hat{k},i)$th RA at Bob and the transmitter at Alice, while $\pmb{p}_{\mathcal{C}(\hat{k},i)}$ is the $\mathcal{C}(\hat{k},i)$th column of $\pmb{P}$. Thus, correct detection is declared, when we have $\hat{k} = k$ and $\hat{m}_i = m_i, \forall i$~\footnote{Although our security capacity discussion later does not rely on a particular detection scheme, we include them for self-completeness.}.

\subsection{Security Capacity of GPSM}
\label{sec_scgpsm}
Since the MIMO channel $\pmb{H}_E$ is intentionally and practically assumed unknown to Alice, we now investigate the security capacity of GPSM scheme dispensing with the knowledge of $\pmb{H}_E$ at Alice.

\subsubsection{DCMC Capacity}
The security capacity as expressed in (\ref{eq_sc}) for GPSM scheme relies on the Discrete input Continuous output Memoryless Channel (DCMC) capacity calculations of $C_B$ and $C_E$, respectively. It is known that Shannon's channel capacity is obtained, when the input signal obeys a Gaussian distribution~\cite{ThomasM.Cover2006}. Our GPSM scheme is special in the sense that the spatial symbol conveys integer values constituted by the RA pattern index, which does not obey the shaping requirements of Gaussian signalling. This implies that the channel capacity of the GPSM scheme depends on a mixture of a continuous and a discrete input. Hence, for simplicity's sake, we discuss the DCMC capacity in the context of discrete-input signalling for both the spatial symbol and for the conventional modulated symbols mapped to it. 

Before calculating $C_B$ and $C_E$ for (\ref{eq_bob}) and (\ref{eq_eve}) employing GPSM scheme, we first provide a general expression of calculating the DCMC capacity that will be used throughout the paper. The DCMC capacity is obtained by maximizing the mutual information between discrete input $\pmb{u} \in \mathcal{B}$ and continuous output $\pmb{z}$, where we have
\begin{align}
\hspace*{-0.25cm}
C &= \max_{p(\pmb{u}_{\tau}), \forall \tau} I(\pmb{u};\pmb{z}) \nonumber \\
&= \max_{p(\pmb{u}_{\tau}), \forall \tau} \sum^{|\mathcal{B}|}_{\tau = 1} \int\limits^{\infty}_{-\infty} p(\pmb{z},\pmb{u}_{\tau}) \log_2 \left( \frac{p(\pmb{z}|\pmb{u}_{\tau})}{\sum^{|\mathcal{B}|}_{\epsilon = 1} p(\pmb{z},\pmb{u}_{\epsilon})}\right) d \pmb{z},
\label{eq_dcmc_t}
\end{align}
furthermore, we have
\begin{align}
&\log_2 \left( \frac{p(\pmb{z}|\pmb{u}_{\tau})}{\sum^{|\mathcal{B}|}_{\epsilon = 1} p(\pmb{z},\pmb{u}_{\epsilon})}\right) = \log_2 \left( \frac{p(\pmb{z}|\pmb{u}_{\tau})}{\sum^{|\mathcal{B}|}_{\epsilon = 1} p(\pmb{z}|\pmb{u}_{\epsilon})p(\pmb{u}_{\epsilon})}\right) \nonumber \\ 
&= -\log_2 \left( \frac{1}{|\mathcal{B}|} \sum^{|\mathcal{B}|}_{\epsilon = 1} \frac{p(\pmb{z}|\pmb{u}_{\epsilon})}{p(\pmb{z}|\pmb{u}_{\tau})} \right) = \log_2(|\mathcal{B}|)-\log_2 \sum^{|\mathcal{B}|}_{\epsilon = 1} \Theta_{\epsilon,\tau},
\label{eq_term}
\end{align}
where we let $\Theta_{\epsilon,\tau} = p(\pmb{z}|\pmb{u}_{\epsilon}) / p(\pmb{z}|\pmb{u}_{\tau})$. By substituting (\ref{eq_term}) into (\ref{eq_dcmc_t}) and exploiting that (\ref{eq_dcmc_t}) is maximized when we have equal probable $p(\pmb{u}_{\tau})$, we have
\begin{align}
C &= \log_2(|\mathcal{B}|)-\frac{1}{|\mathcal{B}|} \sum^{|\mathcal{B}|}_{\tau = 1} \mathbb{E} \left[ \log_2 \sum^{|\mathcal{B}|}_{\epsilon = 1} \Theta_{\epsilon,\tau} \right].
\label{eq_dcmc_tf}
\end{align}

\subsubsection{Security Capacity}
\label{sec_scpgpsm}
By substituting (\ref{eq_txx}) into (\ref{eq_bob}) and (\ref{eq_eve}), we have the explicit relation
\begin{align}
\pmb{y}_B &= \sqrt{\beta / N_a} \pmb{H}_B \pmb{P} \pmb{s} +\pmb{w}_B, \label{eq_bobp}\\
\pmb{y}_E &= \sqrt{\beta / N_a} \pmb{H}_E \pmb{P} \pmb{s} +\pmb{w}_E. \label{eq_evep}
\end{align}
Hence, with perfect knowledge of $\pmb{G}_B = \sqrt{\beta / N_a} \pmb{H}_B \pmb{P}$ and $\pmb{G}_E = \sqrt{\beta / N_a} \pmb{H}_E \pmb{P}$ being the equivalent channels, the conditional probability of receiving $\pmb{y}_B$ and $\pmb{y}_E$ given that $\pmb{s}_{\tau}$ was transmitted are formulated as
\begin{align}
p(\pmb{y}_B|\pmb{s}_{\tau}) &\propto \exp\left\lbrace \frac{-||\pmb{y}_B-\pmb{G}_B\pmb{s}_{\tau}||^2}{\sigma_B^2}\right\rbrace, \label{eq_cond_bp} \\
p(\pmb{y}_E|\pmb{s}_{\tau}) &\propto \exp\left\lbrace \frac{-||\pmb{y}_E-\pmb{G}_E\pmb{s}_{\tau}||^2}{\sigma_E^2}\right\rbrace. \label{eq_cond_ep}
\end{align}
As a result, we have 
\begin{align}
\Theta^B_{\epsilon,\tau} = \exp\left\lbrace \frac{-||\pmb{G}_B(\pmb{s}_{\tau}-\pmb{s}_{\epsilon})+\pmb{w}_B||^2+||\pmb{w}_B||^2}{\sigma_B^2} \right\rbrace, \label{eq_theta_bp} \\
\Theta^E_{\epsilon,\tau} = \exp\left\lbrace \frac{-||\pmb{G}_E(\pmb{s}_{\tau}-\pmb{s}_{\epsilon})+\pmb{w}_E||^2+||\pmb{w}_E||^2}{\sigma_E^2} \right\rbrace, \label{eq_theta_ep}
\end{align}
and hence the DCMC capacity of $C_B$ and $C_E$ for (\ref{eq_bobp}) and (\ref{eq_evep}) may be obtained upon substituting (\ref{eq_theta_bp}) and (\ref{eq_theta_ep}) into (\ref{eq_dcmc_tf}), respectively and also by letting $|\mathcal{B}| = |\mathcal{C}|M^{N_a}$ in (\ref{eq_dcmc_tf}). Finally, the security capacity is obtained by evaluating (\ref{eq_sc}) as a difference between $C_B$ and $C_E$.

\subsubsection{Performance}
Fig~\ref{fig_GPSM} shows the security capacity of GPSM scheme employing CI based pre-coding when $[N_t,N_r] = [8,4]$ and using QPSK modulation having different number of activated RAs $N_A = [1,2,3]$ when $N_e = N_r = 4$ (dash) and $N_e = N_t = 8$ (solid). It is clearly shown in Fig~\ref{fig_GPSM} that when $N_e = N_r = 4$, the security capacity is positive for all $N_a$ considered and activating more RAs results into higher security capacity. On the other hand, no positive security capacity is found when $N_e = N_t = 8$ for all $N_a$ considered as shown by solid curves. Indeed, $N_e = N_t$ constitutes a challenge setting in the physical layer security literature, where typically $N_e < N_t$ was considered. This is because even though the eavesdropper's channel is known by Alice mysteriously, the celebrated strategies of transmitting in the null space of Eve or of minimizing the signal leakage towards Eve leave no degree of freedom for communicating between Alice and Bob if $N_e < N_t$ was violated. This is also true for the typical strategy of generating artificial noise by Alice, where the underlining assumption of $N_e < N_t$ was also explicitly imposed. 

\begin{figure}[t]
\centering
\includegraphics[width=\linewidth]{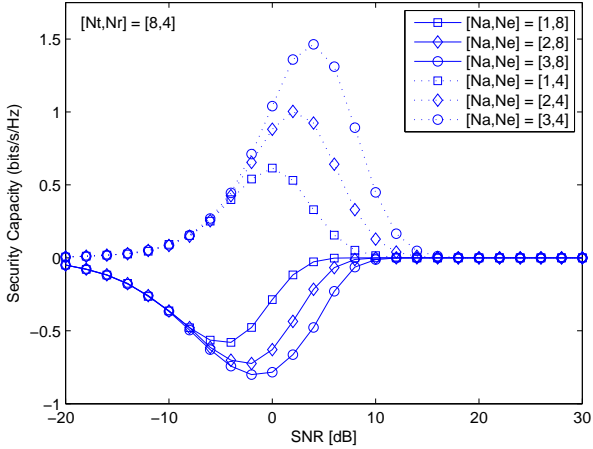}
\caption{Security capacity of GPSM scheme employing CI based pre-coding when $[N_t,N_r] = [8,4]$ and using QPSK modulation having different number of activated RAs $N_A = [1,2,3]$ when $N_e = N_r = 4$ (dash) and $N_e = N_t = 8$ (solid).}
\label{fig_GPSM}
\end{figure}

\section{Security Via Antenna Scrambling}
\label{sec_antennascrambling}
We now propose a novel physical layer security scheme \textit{exclusive} to GPSM scheme with the aid of antenna scrambling. 

\subsection{Motivation and Principle}
In the security capacity of GPSM analysed in Section~\ref{sec_scgpsm} and showed in Fig~\ref{fig_GPSM}, we found that the explicit and complete knowledge of $\mathcal{B}$ known to both Bob and Eve constitutes one potential source of confidentiality leakage, which allows Eve to perform exhaustive search over $\mathcal{B}$ when Eve has unbounded computational capability. 

Based on this backdrop, we propose to \textit{only} convey useful information through the RA patterns in terms of spatial symbols, while the conventional modulated symbols $\pmb{b}_m$ belonging to a finite alphabet of $\mathcal{A}^{N_a}$ are purposely replaced by randomly generated scrambled symbols, which are changed at symbol rate and are unknown to both Bob and Eve. In other words, the useful information in spatial domain is concealed by the time-domain scrambled symbols with infinite possibilities, which are in turn acted as security key.

Thus, the aim is to investigate proper scrambled symbols that result in significant barriers only for Eve to retrieve the useful information in spatial domain, but not for Bob. In this light, we consider both Circular Antenna Scrambling (CAS) and Gaussian Antenna Scrambling (GAS), where the scrambled symbols in the former and in the latter are drawn from a unit circle with uniformly distributed phases and from a complex Gaussian distribution with zero-mean and (normalised) unity variance, respectively. 

\subsection{Security Capacity of GPSM with CAS}
When CAS is considered, we rewrite (\ref{eq_txs}) with conventional modulated symbols $\pmb{b}_m$ replaced by random symbols $\pmb{e} = [e^{j\theta_1},\ldots,e^{j\theta_{N_a}}]^T \in \mathbb{C}^{N_a\times 1}$ drawn from a unit circle with uniformly distributed phase $\theta_i, \forall i \in [1,N_a]$, namely we have
\begin{align}
\pmb{s} &= \pmb{\Omega}_{\tau} \pmb{e}, 
\label{eq_txcas}
\end{align}
when the $\tau$th RA pattern is activated at Bob. It is thus clear that the useful information is contained in $\pmb{\Omega}_{\tau}$ only and is concealed by the phase rotations of $\pmb{e}$. 

Now, if we followed the same procedure as in Section~\ref{sec_scpgpsm} and adopted the same equations as expressed in (\ref{eq_bobp}) and (\ref{eq_evep}) by treating $\pmb{s}$ as a whole in order to obtain the DCMC capacity of $C_B$ and $C_E$, the conditional probability calculations of (\ref{eq_cond_bp}) and (\ref{eq_cond_ep}) become problematic. This is because the symbol $\pmb{s}$ of (\ref{eq_txcas}) belongs to an infinite alphabet in CAS, where its enumeration is impossible. 

On the other hand, we may only consider the spatial domain information contained in $\pmb{\Omega}_{\tau}$ with finite alphabet of $\mathcal{C}$. However, the unknown randomly scrambled symbols of $\pmb{e}$, which is also changed at symbol rate, analogously contribute to strong channel uncertainty to both Bob and Eve. This makes the conditional probability calculations of (\ref{eq_cond_bp}) and (\ref{eq_cond_ep}) become again intractable since coherent detection is not allowed with unknown $\pmb{e}$. 

We thus discuss the security capacity of GPSM scheme with CAS replying on \textit{non-coherent} DCMC capacity calculations of $C_B$ and $C_E$ based on the non-coherent transformation of (\ref{eq_bobp}) and (\ref{eq_evep}) with (\ref{eq_txcas}) been substituted, which are given as follows
\begin{align}
\pmb{r}_B &= |\pmb{G}_B \pmb{\Omega}_{\tau} \pmb{e} +\pmb{w}_B|^2 = |\sideset{}{_{v=1}^{N_a}}\sum \pmb{g}^B_{\mathcal{C}(\tau,v)} e^{j\theta_v} +\pmb{w}_B|^2,  \label{eq_bob_non} \\
\pmb{r}_E &= |\pmb{G}_E \pmb{\Omega}_{\tau} \pmb{e} +\pmb{w}_E|^2 = |\sideset{}{_{v=1}^{N_a}}\sum \pmb{g}^E_{\mathcal{C}(\tau,v)} e^{j\theta_v} +\pmb{w}_E|^2,  \label{eq_eve_non}
\end{align}
where we have $\pmb{g}^B_{\mathcal{C}(\tau,v)}$ and $\pmb{g}^E_{\mathcal{C}(\tau,v)}$ stand for the $\mathcal{C}(\tau,v)$th column of $\pmb{G}_B$ and $\pmb{G}_E$ respectively. Let us further elaborate a bit more on two cases as follows.

\subsubsection{$N_a = 1$}
In this special case, we have a simplified expression of (\ref{eq_bob_non}) and (\ref{eq_eve_non}) given as
\begin{align}
\pmb{r}_B &= |\pmb{g}^B_{\mathcal{C}(\tau,1)} e^{j\theta_1} +\pmb{w}_B|^2,  \label{eq_bob_non_s} \\
\pmb{r}_E &= |\pmb{g}^E_{\mathcal{C}(\tau,1)} e^{j\theta_1} +\pmb{w}_E|^2,  \label{eq_eve_non_s}
\end{align}
Hence, by normalising with respect to $\sigma^2_{B,0} = \sigma^2_{B}/2$ and $\sigma^2_{E,0} = \sigma^2_{E}/2$, it is clear that for the $i$th entry of $\pmb{r}_B$ and $\pmb{r}_E$, we have 
\begin{align}
r_{B,i}/\sigma^2_{B,0} &\sim \chi^2_2 \left[\lambda_{B,i} = |g^B_{\mathcal{C}(\tau,1),i}|^2/\sigma^2_{B,0} \right], \\
r_{E,i}/\sigma^2_{E,0} &\sim \chi^2_2 \left[\lambda_{E,i} = |g^E_{\mathcal{C}(\tau,1),i}|^2/\sigma^2_{E,0} \right], 
\end{align}
where $\chi^2_2(\lambda_{B,i})$ and $\chi^2_2(\lambda_{E,i})$ stand for the non-central chi-square distribution with degree of freedom of two and non-centrality given by $|g^B_{\mathcal{C}(\tau,1),i}|^2/\sigma^2_{B,0}$ and $|g^E_{\mathcal{C}(\tau,1),i}|^2/\sigma^2_{E,0}$ as a result of $|e^{j\theta_1}|^2 = 1$, respectively. In particular, when CI pre-coding of (\ref{eq_ci}) is employed, where 
\begin{align}
\pmb{G}_B = \sqrt{\beta / N_a} \pmb{H}_B \pmb{P} = \sqrt{\beta / N_a} \pmb{I}_{N_r}, 
\label{eq_bs}
\end{align}
we have $\lambda_{B,i} = 0, \forall i \not\in \mathcal{C}(\tau)$ for inactivated RAs and $\lambda_{B,i} = \beta/N_a\sigma^2_{B,0}, \forall i \in \mathcal{C}(\tau)$ for activated RAs. 

Thus the conditional probability of receiving $\pmb{r}_B$ and $\pmb{r}_E$ given that $\pmb{\Omega}_{\tau}$ was transmitted and subjected to the knowledge of equivalent channels $\pmb{G}_B$ and $\pmb{G}_E$ but dispensing with the knowledge of $\pmb{e}$ are formulated as
\begin{align}
p(\pmb{r}_B|\pmb{\Omega}_{\tau}) &= \prod_{i=1}^{N_r} f_{\chi^2_2 \left[|g^B_{\mathcal{C}(\tau,1),i}|^2/\sigma^2_{B,0} \right]} \left( \frac{r_{B,i}}{\sigma^2_{B,0}} \right), \label{eq_cond_non_b} \\
p(\pmb{r}_E|\pmb{\Omega}_{\tau}) &= \prod_{i=1}^{N_e} f_{\chi^2_2 \left[|g^E_{\mathcal{C}(\tau,1),i}|^2/\sigma^2_{E,0} \right]} \left( \frac{r_{E,i}}{\sigma^2_{E,0}} \right). \label{eq_cond_non_e}
\end{align}
As a result, we have 
\begin{align}
\hspace*{-0.25cm}
\Theta^B_{\epsilon,\tau} &= \frac{\sideset{}{_{i=1}^{N_r}} \prod f_{\chi^2_2 \left[|g^B_{\mathcal{C}(\epsilon,1),i}|^2/\sigma^2_{B,0} \right]} \left( r_{B,i}/\sigma^2_{B,0} \right)}{\sideset{}{_{i=1}^{N_r}} \prod f_{\chi^2_2 \left[|g^B_{\mathcal{C}(\tau,1),i}|^2/\sigma^2_{B,0} \right]} \left( r_{B,i}/\sigma^2_{B,0} \right)}, \label{eq_theta_non_b} \\
\Theta^E_{\epsilon,\tau} &= \frac{\sideset{}{_{i=1}^{N_e}} \prod f_{\chi^2_2 \left[|g^E_{\mathcal{C}(\epsilon,1),i}|^2/\sigma^2_{E,0} \right]} \left( r_{E,i}/\sigma^2_{E,0} \right)}{\sideset{}{_{i=1}^{N_r}} \prod f_{\chi^2_2 \left[|g^E_{\mathcal{C}(\tau,1),i}|^2/\sigma^2_{E,0} \right]} \left( r_{E,i}/\sigma^2_{E,0} \right)}, \label{eq_theta_non_e}
\end{align}
and hence when $N_a = 1$, the non-coherent DCMC capacity of $C_B$ and $C_E$ for (\ref{eq_bob_non_s}) and (\ref{eq_eve_non_s}) may be obtained by letting $|\mathcal{B}| = |\mathcal{C}|$ and substituting (\ref{eq_theta_non_b}) and (\ref{eq_theta_non_e}) into (\ref{eq_dcmc_tf}). Finally, the security capacity is obtained by evaluating (\ref{eq_sc}).

\subsubsection{$N_a > 1$}
Now we consider (\ref{eq_bob_non}) and (\ref{eq_eve_non}) when $N_a > 1$, where in this case we have 
\begin{align}
\hspace*{-0.25cm}
r_{B,i}/\sigma^2_{B,0} &\sim \chi^2_2\left[\lambda_{B,i} = | \sideset{}{_{v=1}^{N_a}}\sum g^B_{\mathcal{C}(\tau,v),i} e^{j\theta_v} |^2/\sigma^2_{B,0} \right], \label{eq_dist_bob_cas} \\
\hspace*{-0.25cm}
r_{E,i}/\sigma^2_{E,0} &\sim \chi^2_2\left[\lambda_{E,i} = | \sideset{}{_{v=1}^{N_a}}\sum g^E_{\mathcal{C}(\tau,v),i} e^{j\theta_v} |^2/\sigma^2_{E,0} \right]. \label{eq_dist_eve_cas}
\end{align}
In contrast to $N_a = 1$, since $\pmb{e}$ is unknown to both Bob and Eve, the non-centrality of both $\lambda_{B,i}$ and $\lambda_{E,i}$ of (\ref{eq_dist_bob_cas}) and (\ref{eq_dist_eve_cas}) become also unavailable in general. This is because the impact of the unknown $\pmb{e}$ adds extra constructive and destructive fading. As a result, the conditional probability calculations similar to (\ref{eq_cond_non_b}) and (\ref{eq_cond_non_e}) may be intractable. 

In particular, when the CI pre-coding of (\ref{eq_ci}) is employed with the beneficial structure of (\ref{eq_bs}), we have $\lambda_{B,i} = 0, \forall i \not\in \mathcal{C}(\tau)$ for inactivated RAs and $\lambda_{B,i} = \beta/N_a\sigma^2_{B,0}, \forall i \in \mathcal{C}(\tau)$ for activated RAs. Hence, the knowledge of $\pmb{e}$ becomes \textit{unnecessary} for Bob in order to evaluate (\ref{eq_dist_bob_cas}). However, the beneficial structure similar to (\ref{eq_bs}) is not seen for Eve as $\pmb{H}_E\pmb{P} \in \mathcal{C}^{N_e \times N_r}$, so the knowledge of $\pmb{e}$ becomes \textit{crucial} for Eve in order to evaluate (\ref{eq_dist_eve_cas}), resulting into a barrier. 

Consequently, with the knowledge of both $\pmb{H}_B$ and $\pmb{H}_E$, Eve is forced to be 'pretended' as Bob by conducting post-processing on $\pmb{y}_E$ of (\ref{eq_evep}) as
\begin{align}
\pmb{\tilde{y}}_{E} &=\pmb{H}_B(\pmb{H}^H_E\pmb{H}_E)^{-1}\pmb{H}^H_E \pmb{y}_E, \nonumber \\
&=\pmb{G}_B\pmb{\Omega}_{\tau} \pmb{e}+\pmb{H}_B(\pmb{H}^H_E\pmb{H}_E)^{-1}\pmb{H}^H_E\pmb{w}_E,
\label{eq_eve_post_cas}
\end{align}
where by letting $\pmb{\tilde{w}}_E = \pmb{H}_B(\pmb{H}^H_E\pmb{H}_E)^{-1}\pmb{H}^H_E\pmb{w}_E$, the non-coherent transformation of (\ref{eq_eve_post_cas}) is 
\begin{align}
\pmb{\tilde{r}}_{E} &=|\pmb{G}_B\pmb{\Omega}_{\tau} \pmb{e}+\pmb{\tilde{w}}_E|^2 = |\sideset{}{_{v=1}^{N_a}}\sum \pmb{g}^B_{\mathcal{C}(\tau,v)} e^{j\theta_v} +\pmb{\tilde{w}}_E|^2.
\label{eq_eve_non_post_cas}
\end{align}
Hence, by being forced to equip higher number of antennas than Alice ($N_e \geq N_t$) in order to perform the above post-processing of (\ref{eq_eve_post_cas}), Eve finally becomes also capable of retrieving the information concealed in the spatial domain dispensing with the knowledge of $\pmb{e}$ at the cost of noise amplification provided $\sigma^2_E > 0$. More explicitly, by normalising with respect to $\tilde{\sigma}^2_{E,0} = \tilde{\sigma}^2_{E}/2$, we have $\tilde{r}_{E,i}/\tilde{\sigma}^2_{E,0} \sim \chi^2_2(\tilde{\lambda}_{E,i})$ with $\tilde{\lambda}_{E,i} = 0, \forall i \not\in \mathcal{C}(\tau)$ for inactivated RAs and $\tilde{\lambda}_{E,i} = \beta/N_a\tilde{\sigma}^2_{E,0}, \forall i \in \mathcal{C}(\tau)$ for activated RAs.  

Thus, when the CI pre-coding of (\ref{eq_ci}) is employed and $N_e \geq N_t$, the conditional probability of receiving $\pmb{r}_B$ and $\pmb{\tilde{r}}_{E}$ given that $\pmb{\Omega}_{\tau}$ was transmitted and subjected to the knowledge of equivalent channels $\pmb{G}_B$ and $\pmb{G}_E$ but dispensing with the knowledge of $\pmb{e}$ are formulated as
\begin{align}
\hspace*{-0.25cm}
p(\pmb{r}_B|\pmb{\Omega}_{\tau}) &= \prod_{i=1}^{N_r} f_{\chi^2_2 \left[ \mathbbm{1}[i \in \mathcal{C}(\tau)]\beta/N_a\sigma^2_{B,0} \right]} \left( \frac{r_{B,i}}{\sigma^2_{B,0}} \right), \label{eq_cond_non_b1} \\
\hspace*{-0.25cm}
p(\pmb{\tilde{r}}_E|\pmb{\Omega}_{\tau}) &= \prod_{i=1}^{N_r} f_{\chi^2_2\left[ \mathbbm{1}[i \in \mathcal{C}(\tau)]\beta/N_a\tilde{\sigma}^2_{E,0}\right]} \left( \frac{\tilde{r}_{E,i}}{\tilde{\sigma}^2_{E,0}} \right), \label{eq_cond_non_e1}
\end{align}
where $\mathbbm{1}[\cdot]$ is the indicator function that returns one (zero) when the entry is true (false). As a result, we have 
\begin{align}
\Theta^B_{\epsilon,\tau} &= \frac{\sideset{}{_{i=1}^{N_r}}\prod f_{\chi^2_2 \left[ \mathbbm{1}[i \in \mathcal{C}(\epsilon)]\beta/N_a\sigma^2_{B,0} \right]} \left( r_{B,i}/\sigma^2_{B,0} \right)}{\sideset{}{_{i=1}^{N_r}}\prod f_{\chi^2_2 \left[ \mathbbm{1}[i \in \mathcal{C}(\tau)]\beta/N_a\sigma^2_{B,0} \right]} \left( r_{B,i}/\sigma^2_{B,0} \right)}, \label{eq_theta_non_b2} \\
\Theta^E_{\epsilon,\tau} &= \frac{\sideset{}{_{i=1}^{N_r}}\prod f_{\chi^2_2\left[ \mathbbm{1}[i \in \mathcal{C}(\epsilon)]\beta/N_a\tilde{\sigma}^2_{E,0}\right]} \left( \tilde{r}_{E,i}/\tilde{\sigma}^2_{E,0} \right)}{\sideset{}{_{i=1}^{N_r}}\prod f_{\chi^2_2\left[ \mathbbm{1}[i \in \mathcal{C}(\tau)]\beta/N_a\tilde{\sigma}^2_{E,0}\right]} \left( \tilde{r}_{E,i}/\tilde{\sigma}^2_{E,0} \right)}, \label{eq_theta_non_e2}
\end{align}
and hence when $N_a > 1$ and CI pre-coding is employed with $N_e \geq N_t$, the non-coherent DCMC capacity of $C_B$ and $C_E$ for (\ref{eq_bob_non}) and (\ref{eq_eve_non_post_cas}) may be obtained by letting $|\mathcal{B}| = |\mathcal{C}|$ and substituting (\ref{eq_theta_non_b2}) and (\ref{eq_theta_non_e2}) into (\ref{eq_dcmc_tf}). Then, the security capacity is obtained by evaluating (\ref{eq_sc}). Finally, note that when $N_e < N_t$, Eve becomes not capable of gaining any information contained in the RA patterns without knowing $\pmb{e}$ even though $\sigma^2_E \mapsto 0$. 

\subsection{Security Capacity of GPSM with GAS}
When GAS is considered, we rewrite (\ref{eq_txs}) with conventional modulated symbols $\pmb{b}_m$ replaced by random symbols $\pmb{n} = [n_1,\ldots,n_{N_a}]^T \in \mathbb{C}^{N_a\times 1}$ drawn from a complex Gaussian distribution with zero-mean and (normalised) unity variance, i.e. $n_i \in \mathcal{CN}(0,1)$, namely we have
\begin{align}
\pmb{s} &= \pmb{\Omega}_{\tau} \pmb{n}, 
\label{eq_txgas}
\end{align}
when the $\tau$th RA pattern is activated at Bob. Now, when compared with the CAS case, the uncertainties arise from both phase and amplitude. From certain perspective, we may treat GAS as introducing multiplicative artificial noise. 

Similar to CAS, since the symbol $\pmb{s}$ belongs to an infinite alphabet in GAS, we may consider the finite alphabet of $\mathcal{C}$ for the spatial domain information contained in $\pmb{\Omega}_{\tau}$. But unlike in CAS, now coherent DCMC capacity calculations of $C_B$ and $C_E$ are possible. By substituting (\ref{eq_txgas}) into (\ref{eq_bobp}) and (\ref{eq_evep}), we have
\begin{align}
\pmb{y}_B &= \pmb{G}_B \pmb{\Omega}_{\tau} \pmb{n} +\pmb{w}_B = \sideset{}{_{v=1}^{N_a}}\sum \pmb{g}^B_{\mathcal{C}(\tau,v)} n_v +\pmb{w}_B,  \label{eq_bob_g} \\
\pmb{y}_E &= \pmb{G}_E \pmb{\Omega}_{\tau} \pmb{n} +\pmb{w}_E = \sideset{}{_{v=1}^{N_a}}\sum \pmb{g}^E_{\mathcal{C}(\tau,v)} n_v +\pmb{w}_E.  \label{eq_eve_g}
\end{align}
Then it is clear that for the $i$th entry of $\pmb{y}_B$ and $\pmb{y}_E$ as expressed in (\ref{eq_bob_g}) and (\ref{eq_eve_g}), we have 
\begin{align}
y_{B,i} \sim \mathcal{CN}(0,\sideset{}{_{v=1}^{N_a}}\sum |g^B_{\mathcal{C}(\tau,v),i}|^2 + \sigma^2_B),  \\
y_{E,i} \sim \mathcal{CN}(0,\sideset{}{_{v=1}^{N_a}}\sum |g^E_{\mathcal{C}(\tau,v),i}|^2 + \sigma^2_E), 
\end{align}
where without the explicit knowledge of $\pmb{n}$, the conditional probability of receiving $\pmb{y}_B$ and $\pmb{y}_{E}$ given that $\pmb{\Omega}_{\tau}$ was transmitted and subjected to the knowledge of equivalent channels $\pmb{G}_B$ and $\pmb{G}_E$ are formulated as
\begin{align}
p(\pmb{y}_B|\pmb{\Omega}_{\tau}) &\propto \exp \left\lbrace -\sum_{i=1}^{N_r} \frac{|y_{B,i}|^2}{\sum_{v=1}^{N_a} |g^B_{\mathcal{C}(\tau,v),i}|^2+\sigma^2_B}\right\rbrace, \\
p(\pmb{y}_E|\pmb{\Omega}_{\tau}) &\propto \exp \left\lbrace -\sum_{i=1}^{N_e} \frac{|y_{E,i}|^2}{\sum_{v=1}^{N_a} |g^E_{\mathcal{C}(\tau,v),i}|^2 + \sigma^2_E}\right\rbrace.
\end{align}
As a result, we have 
\begin{align}
\hspace*{-0.25cm}
\Theta^B_{\epsilon,\tau} &= \frac{\exp \left\lbrace \displaystyle \sum_{i=1}^{N_r} |y_{B,i}|^2 / \left(\sum_{v=1}^{N_a} |g^B_{\mathcal{C}(\tau,v),i}|^2+\sigma^2_B \right) \right\rbrace}{\exp \left\lbrace \displaystyle \sum_{i=1}^{N_r} |y_{B,i}|^2 / \left(\sum_{v=1}^{N_a} |g^B_{\mathcal{C}(\epsilon,v),i}|^2+\sigma^2_B \right) \right\rbrace}, \label{eq_theta_gas_b}\\
\hspace*{-0.25cm}
\Theta^E_{\epsilon,\tau} &= \frac{\exp \left\lbrace \displaystyle \sum_{i=1}^{N_e} |y_{E,i}|^2 / \left(\sum_{v=1}^{N_a} |g^E_{\mathcal{C}(\tau,v),i}|^2+\sigma^2_E \right) \right\rbrace}{\exp \left\lbrace \displaystyle \sum_{i=1}^{N_e} |y_{E,i}|^2 / \left(\sum_{v=1}^{N_a} |g^E_{\mathcal{C}(\epsilon,v),i}|^2+\sigma^2_E \right) \right\rbrace},\label{eq_theta_gas_e}
\end{align}
and hence dispensing with the knowledge of $\pmb{n}$, the DCMC capacity of $C_B$ and $C_E$ for (\ref{eq_bob_g}) and (\ref{eq_eve_g}) may be obtained by letting $|\mathcal{B}| = |\mathcal{C}|$ and substituting (\ref{eq_theta_gas_b}) and (\ref{eq_theta_gas_e}) into (\ref{eq_dcmc_tf}). Finally, the security capacity is obtained by evaluating (\ref{eq_sc}).

However, a further look results into the following Lemma
\begin{lem} 
The DCMC capacity calculation of $C_B$ based on (\ref{eq_theta_gas_b}) for Bob is $0 \leq C_B \leq \log_2(|\mathcal{C}|)$, but the DCMC capacity calculation of $C_E$ based on (\ref{eq_theta_gas_e}) returns zero capacity for Eve, i.e. we have $C_E = 0$.
\label{lemma}
\end{lem}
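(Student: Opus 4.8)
The plan is to read off both capacities from the master expression (\ref{eq_dcmc_tf}) specialised to $|\mathcal{B}|=|\mathcal{C}|$, but to treat Bob and Eve on a completely different footing. For Bob the two bounds are structural. The upper bound is immediate: $\Theta^B_{\tau,\tau}=1$ for every $\tau$, so $\sum_{\epsilon}\Theta^B_{\epsilon,\tau}\geq 1$ pointwise, the term subtracted in (\ref{eq_dcmc_tf}) is nonnegative, and $C_B\leq\log_2(|\mathcal{C}|)$. For the lower bound I would exploit that, under CI pre-coding, (\ref{eq_bs}) gives $\pmb{G}_B=\sqrt{\beta/N_a}\,\pmb{I}_{N_r}$, so conditioned on $\pmb{\Omega}_{\tau}$ the entries of $\pmb{y}_B$ in (\ref{eq_bob_g}) are \emph{genuinely} mutually independent, namely $\mathcal{CN}(0,\beta/N_a+\sigma_B^2)$ on the $N_a$ active RAs and $\mathcal{CN}(0,\sigma_B^2)$ elsewhere. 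Because every pattern activates exactly $N_a$ RAs, the normalising constant dropped by the ``$\propto$'' step when forming (\ref{eq_theta_gas_b}) is the \emph{same} for all patterns and cancels in $\Theta^B_{\epsilon,\tau}$; hence (\ref{eq_theta_gas_b}) is the exact likelihood ratio and (\ref{eq_dcmc_tf}) is a bona fide mutual information $I(\pmb{\Omega};\pmb{y}_B)\geq 0$. (Equivalently, Jensen's inequality in (\ref{eq_dcmc_tf}) together with $\mathbb{E}[\Theta^B_{\epsilon,\tau}\mid\pmb{\Omega}_{\tau}]=\int p(\pmb{y}_B|\pmb{\Omega}_{\epsilon})\,d\pmb{y}_B=1$ yields $0\leq C_B\leq\log_2(|\mathcal{C}|)$ at once.)

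For Eve the situation is structurally different because $\pmb{G}_E=\sqrt{\beta/N_a}\,\pmb{H}_E\pmb{P}$ carries none of the diagonal, pattern-revealing structure of (\ref{eq_bs}): in (\ref{eq_eve_g}) each $y_{E,i}$ is a superposition of \emph{all} $N_a$ scrambling symbols through a dense row of $\pmb{G}_E$, so conditioned on $\pmb{\Omega}_{\tau}$ it is $\mathcal{CN}(0,D_{\tau}(i))$ with $D_{\tau}(i)=\sum_{v=1}^{N_a}|g^E_{\mathcal{C}(\tau,v),i}|^2+\sigma_E^2$ --- exactly the denominator appearing in (\ref{eq_theta_gas_e}). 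I would then substitute (\ref{eq_theta_gas_e}) into (\ref{eq_dcmc_tf}) and split $\log_2\sum_{\epsilon}\Theta^E_{\epsilon,\tau}=\log_2 e\sum_{i=1}^{N_e}|y_{E,i}|^2/D_{\tau}(i)+\log_2\sum_{\epsilon}\exp\{-\sum_{i}|y_{E,i}|^2/D_{\epsilon}(i)\}$. Taking the expectation under the true law $\pmb{y}_E\sim p(\cdot|\pmb{\Omega}_{\tau})$, the first summand collapses to exactly $N_e\log_2 e$, since $\mathbb{E}[|y_{E,i}|^2\mid\pmb{\Omega}_{\tau}]=D_{\tau}(i)$ cancels its own denominator term by term. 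Hence $C_E=0$ becomes \emph{equivalent} to showing that the residual, averaged over a uniformly chosen $\tau$, equals precisely $\log_2(|\mathcal{C}|)-N_e\log_2 e$, i.e.\ that $\sum_{\epsilon}\Theta^E_{\epsilon,\tau}$ may be replaced, for the purposes of (\ref{eq_dcmc_tf}), by the constant $|\mathcal{C}|$.

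That final reduction is the main obstacle, and it is where the absence of a pattern-selective quantity at Eve must be converted into a proof. The intuition to formalise is that, unlike Bob's two-valued variance profile --- which is a perfect label for $\pmb{\Omega}_{\tau}$ --- the exponents $\sum_i|y_{E,i}|^2/D_{\epsilon}(i)$ that enter (\ref{eq_theta_gas_e}) cannot, under $p(\cdot|\pmb{\Omega}_{\tau})$, statistically favour $\epsilon=\tau$ over any other $\epsilon$: the scrambling vector $\pmb{n}$ has infinite (Gaussian) support, so no coherent pattern detection is possible, and the i.i.d.\ Rayleigh matrix $\pmb{H}_E$ renders the family $\{D_{\epsilon}(\cdot)\}_{\epsilon}$ statistically exchangeable, so that each term of $\sum_{\epsilon}\Theta^E_{\epsilon,\tau}$ behaves alike and the sum has no informative dependence on $\tau$. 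Concretely I would first compute the quadratic-form moments of $\sum_i|y_{E,i}|^2/D_{\epsilon}(i)$ under the true Gaussian law, then use the channel exchangeability to show the $\tau$-average of the residual is pattern-free, and finally control the fluctuation of $\sum_{\epsilon}\Theta^E_{\epsilon,\tau}$ about $|\mathcal{C}|$ so the replacement is valid inside $\mathbb{E}[\log_2(\cdot)]$; turning this from heuristic into rigorous is the delicate part. Once $C_E=0$ is secured, combining it with the already-established $0\leq C_B\leq\log_2(|\mathcal{C}|)$ and with (\ref{eq_sc}) gives the announced $C_S=C_B\geq 0$.
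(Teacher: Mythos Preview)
Your treatment of Bob is correct and actually cleaner than the paper's: the paper reaches $0\le C_B\le\log_2|\mathcal{C}|$ too, but via a chain of Jensen inequalities applied to $\mathbb{E}\bigl[\log_2\sum_{\epsilon}\Theta_{\epsilon,\tau}\bigr]$ and then a two-regime analysis in $\sigma_B^2$, rather than your direct ``$\Theta^B_{\tau,\tau}=1$'' upper bound and ``$\Theta^B_{\epsilon,\tau}$ is a genuine likelihood ratio, hence $\mathbb{E}[\Theta^B_{\epsilon,\tau}\mid\pmb{\Omega}_\tau]=1$'' lower bound.

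For Eve, however, your proposal has a genuine gap and also diverges from the paper's route. You reduce $C_E=0$ to showing that the $\tau$-averaged residual $\mathbb{E}\bigl[\log_2\sum_{\epsilon}\exp\{-\sum_i|y_{E,i}|^2/D_{\epsilon}(i)\}\bigr]$ equals the constant $\log_2|\mathcal{C}|-N_e\log_2 e$, and you propose to get there by ``channel exchangeability'' plus a fluctuation control of $\sum_{\epsilon}\Theta^E_{\epsilon,\tau}$ about $|\mathcal{C}|$ --- but you explicitly concede that this last step is heuristic, and nothing in your sketch supplies the needed concentration. The paper does \emph{not} attempt this evaluation. Instead it reuses the same Jensen chain as for Bob to lower-bound $\mathbb{E}\bigl[\log_2\sum_{\epsilon}\Theta^E_{\epsilon,\tau}\bigr]$ by $\log_2\bigl(1+\sum_{\epsilon\ne\tau}e^{\mathbb{E}[A-B]}\bigr)$, where $A$ and $B$ are the exponents in the numerator and denominator of (\ref{eq_theta_gas_e}); the entire claim $C_E=0$ then collapses to the single identity $\mathbb{E}[A-B]=0$. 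That identity is exactly where the i.i.d.\ Rayleigh structure of $\pmb{H}_E$ enters: writing $A'=\sum_{v}|g^E_{\mathcal{C}(\tau,v),i}|^2$ and $B'=\sum_{v}|g^E_{\mathcal{C}(\epsilon,v),i}|^2$, one has $\mathbb{E}[A']=\mathbb{E}[B']$ for every pair $(\tau,\epsilon)$, whence $\mathbb{E}[A-B]=0$, the lower bound becomes $\log_2|\mathcal{C}|$, and (\ref{eq_dcmc_tf}) gives $C_E\le 0$; combined with $C_E\ge 0$ this forces $C_E=0$. So the symmetry you invoke is indeed the mechanism, but in the paper it acts \emph{after} the Jensen reduction on the scalar $\mathbb{E}[A-B]$, not through any control of $\sum_{\epsilon}\Theta^E_{\epsilon,\tau}$ as a random variable.
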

\begin{proof}
Let us rewrite (\ref{eq_dcmc_tf}) here by setting $|\mathcal{B}| = |\mathcal{C}|$ as
\begin{align}
C &= \log_2(|\mathcal{C}|)-\frac{1}{|\mathcal{C}|} \sum^{|\mathcal{C}|}_{\tau = 1} \mathbb{E} \left[ \log_2 \sum^{|\mathcal{C}|}_{\epsilon = 1} \Theta_{\epsilon,\tau} \right],
\label{eq_rewrite}
\end{align}
and we further expand the following term as
\begin{align}
&\mathbb{E} \left[ \log_2 \sum^{|\mathcal{C}|}_{\epsilon = 1} \Theta_{\epsilon,\tau} \right] = \mathbb{E} \left[ - \log_2 \frac{1}{\sum^{|\mathcal{C}|}_{\epsilon = 1} \Theta_{\epsilon,\tau}} \right]  \nonumber \\
&\geq -\log_2  \mathbb{E} \left[ \frac{1}{\sum^{|\mathcal{C}|}_{\epsilon = 1} \Theta_{\epsilon,\tau}} \right] \geq -\log_2 \frac{1}{\sum^{|\mathcal{C}|}_{\epsilon = 1}  \mathbb{E} \left[\Theta_{\epsilon,\tau}  \right]} \nonumber \\
&= \log_2 \sum^{|\mathcal{C}|}_{\epsilon = 1}  \mathbb{E} \left[ \Theta_{\epsilon,\tau}  \right] = \log_2 \sum^{|\mathcal{C}|}_{\epsilon = 1}  \mathbb{E} \left[ e^{A-B} \right] \nonumber \\
&\geq \log_2 \left( 1+\sum^{|\mathcal{C}|}_{\epsilon = 1, \epsilon \neq \tau}  e^{\mathbb{E} \left[ A-B \right]} \right),
\label{eq_expand_important} 
\end{align}
where the inequities involved during the above derivations are followed from Jensen's inequality. Further, A and B are shorthand for the inner term in the exponential operator of both the numerator and denominator, respectively, either of (\ref{eq_theta_gas_b}) or of (\ref{eq_theta_gas_e}). Substituting (\ref{eq_expand_important}) into (\ref{eq_rewrite}), we have
\begin{align}
C &\leq \log_2(|\mathcal{C}|)-\frac{1}{|\mathcal{C}|} \sum^{|\mathcal{C}|}_{\tau = 1} \log_2 \left( 1+\sum^{|\mathcal{C}|}_{\epsilon = 1, \epsilon \neq \tau}  e^{\mathbb{E} \left[ A-B \right]} \right)
\label{eq_expand}
\end{align}

1) When $C_B$ is considered, let us first define $\mathcal{C}_c = \mathcal{C}(\tau) \cap \mathcal{C}(\epsilon)$ as the set containing the commonly activated RAs of two different activation patterns and thus $\mathcal{C}_d = \bar{\mathcal{C}}_c$ denotes the complementary set containing the activated RAs in difference. We further define $\mathcal{C}^u(\tau) = \mathcal{C}(\tau) \cap \mathcal{C}_d$ and $\mathcal{C}^u(\epsilon) = \mathcal{C}(\epsilon) \cap \mathcal{C}_d$ as the set containing the activated RAs uniquely found in $\mathcal{C}(\tau)$ and $\mathcal{C}(\epsilon)$, respectively. With the above definitions and by exploiting the beneficial property of (\ref{eq_bs}), we have
\begin{align}
\mathbb{E} \left[ A-B \right] &= \sum^{|\mathcal{C}_d|/2}_{i=1} \frac{\mathbb{E} \left[\Upsilon_i\right]}{\beta/N_a+\sigma^2_B}-\frac{\mathbb{E} \left[\Upsilon_i\right]}{\sigma^2_B}. 
\label{eq_expand_theta}
\end{align}
where we have $\Upsilon_i = |y^B_{\mathcal{C}^u(\tau,i)}|^2-|y^B_{\mathcal{C}^u(\epsilon,i)}|^2$ and $\mathbb{E} \left[\Upsilon_i\right] = \beta/N_a$. Now, when $\sigma^2_B \mapsto 0$, the summation of (\ref{eq_expand_theta}) approaches $-\infty$, so (\ref{eq_expand_important}) approaches zero and (\ref{eq_expand}) results into $C_B \leq \log_2(|\mathcal{C}|)$. On the other hand, when $\sigma^2_B \mapsto \infty$, the summation of (\ref{eq_expand_theta}) approaches zero, so (\ref{eq_expand_important}) approaches $\log_2 (|\mathcal{C}|)$ and (\ref{eq_expand}) results into $C_B \leq 0$. Since we also must have $C_B \geq 0$, we end up with $C_B = 0$. Overall, we have $0 \leq C_B \leq \log_2(|\mathcal{C}|)$.

2) To see $C_E = 0$, it is simple to see from (\ref{eq_theta_gas_e}) that
\begin{align}
&\mathbb{E} \left[ A-B \right] = \sum^{N_e}_{i=1} \mathbb{E} \left[ \frac{|y_{E,i}|^2}{A'+\sigma^2_E} \right]  - \mathbb{E} \left[ \frac{|y_{E,i}|^2}{B'+\sigma^2_E} \right] = 0,
\label{eq_expand_theta_e}
\end{align}
where $A' = \sum_{v=1}^{N_a} |g^E_{\mathcal{C}(\tau,v),i}|^2$ and $B' = \sum_{v=1}^{N_a} |g^E_{\mathcal{C}(\epsilon,v),i}|^2$ and we have $\mathbb{E} \left[ A'\right] = \mathbb{E} \left[ B'\right]$. As a result, (\ref{eq_expand_important}) approaches $\log_2 (|\mathcal{C}|)$ and hence (\ref{eq_expand}) results into $C_E \leq 0$. Since we also must have $C_E \geq 0$, we end up with $C_E = 0$.
\end{proof} 
\noindent Lemma~\ref{lemma} suggests that Eve cannot follow the same approach to retrieve the useful information as Bob. Thus, similarly to the context of CAS, by being forced to equip more antennas than Alice ($N_e \geq N_t$) and at the cost of noise amplification provided $\sigma^2_E > 0$, Eve is again tried to be 'pretended' as Bob by conducting post-processing on $\pmb{y}_E$ of (\ref{eq_eve_g}) to arrive at
\begin{align}
\pmb{\tilde{y}}_{E} &=\pmb{G}_B\pmb{\Omega}_{\tau} \pmb{n}+\pmb{H}_B(\pmb{H}^H_E\pmb{H}_E)^{-1}\pmb{H}^H_E\pmb{w}_E.
\label{eq_eve_g2}
\end{align}  
Hence, we have
\begin{align}
\tilde{y}_{E,i} &\sim \mathcal{CN}(0,\sideset{}{_{v=1}^{N_r}}\sum |g^B_{\mathcal{C}(\tau,v),i}|^2 + \tilde{\sigma}^2_E), 
\end{align}
and after a few manipulations, we arrive at
\begin{align}
\Theta^E_{\epsilon,\tau} &= \frac{\exp \left\lbrace \displaystyle \sum_{i=1}^{N_r} |\tilde{y}_{E,i}|^2 / \left(\sum_{v=1}^{N_a} |g^B_{\mathcal{C}(\tau,v),i}|^2+\tilde{\sigma}^2_E \right) \right\rbrace}{\exp \left\lbrace \displaystyle \sum_{i=1}^{N_r} |\tilde{y}_{E,i}|^2 / \left(\sum_{v=1}^{N_a} |g^B_{\mathcal{C}(\epsilon,v),i}|^2+\tilde{\sigma}^2_E \right) \right\rbrace}, \label{eq_theta_gas_e2}
\end{align}
and hence dispensing with the knowledge of $\pmb{n}$, the DCMC capacity of $C_E$ for (\ref{eq_eve_g2}) may be obtained by letting $|\mathcal{B}| = |\mathcal{C}|$ and substituting (\ref{eq_theta_gas_e2}) into (\ref{eq_dcmc_tf}). Then, the security capacity is obtained by evaluating (\ref{eq_sc}). Finally, again note that when $N_e < N_t$, Eve becomes not capable of gaining any information contained in the RA patterns without knowing $\pmb{e}$ even though $\sigma^2_E \mapsto 0$, since the operation of (\ref{eq_eve_g2}) is not allowed. 

\subsection{Further Discussions}
\begin{table}[t]
\centering
\begin{tabular}{ |c|c|c|c|c| }
\hline
 & GPSM & \multicolumn{2}{ |c| }{CAS} & GAS \\ \hline
\multirow{2}{*}{$\Theta^B_{\epsilon,\tau}$, $\Theta^E_{\epsilon,\tau}$} & \multirow{2}{*}{(\ref{eq_theta_bp}),(\ref{eq_theta_ep})} & $N_a = 1$ & $N_a > 1$ & \multirow{2}{*}{(\ref{eq_theta_gas_b}), (\ref{eq_theta_gas_e2})} \\ \cline{3-4}
& & (\ref{eq_theta_non_b}), (\ref{eq_theta_non_e}) & (\ref{eq_theta_non_b2}), (\ref{eq_theta_non_e2}) & \\ \hline
\end{tabular}
\caption{Summarized findings of $\{ \Theta^B_{\epsilon,\tau}, \Theta^E_{\epsilon,\tau} \}$.}
\label{table}
\end{table}

\subsubsection{Example}
We now provide an example to better aid the conceptual understanding of our proposed physical layer security of GPSM with antenna scrambling. Fig~\ref{fig_casgas} shows the plot of 1000 samples of the received signal $\pmb{y}_B$ at Bob and $\pmb{y}_E$ at Eve when CI pre-coding aided GPSM is employed at Alice with $[N_t,N_r,N_e] = [8,4,4]$ under a particular MIMO channel realisation of $\pmb{H}_B$ and $\pmb{H}_E$ when $\mathrm{SNR}$ is set to 30dB to remove the disturbance of noise. Thus we focus on the effect of CAS and GAS on the realisations of $\pmb{y}_B$ and $\pmb{y}_E$. In Fig~\ref{fig_casgas}, we include three different cases correspond to three rows of subplots in pair for $\pmb{y}_B$ (left) and $\pmb{y}_E$ (right), respectively. On each subplot, the received signal samples of each individual RA are detailed in four miniplots, each with horizontal axis and vertical axis represent real and imaginary part of $y_{B,i}, \forall i \in [1,N_r]$ or $y_{E,i}, \forall i \in [1,N_e]$. 

The first row of Fig~\ref{fig_casgas} illustrate the scenario, where $N_a = 1$ and employing CAS with first RA been activated. It can be seen from Fig~\ref{fig_casb1} that the received signal samples of the first RA at Bob is clearly distinguishable with respect to that of the rest RAs. Moreover, being a circle at the first RA implies that its non-coherent transformation (magnitude) constitutes a constant value. On the other hand, the four distinguishable circles observed at Eve from Fig~\ref{fig_case1} imply that Eve is also unaffected by CAS, but failing to capture the correct pattern as compared to Bob. 

When the second row of Fig~\ref{fig_casgas} is considered, where $N_a > 1$ and employing CAS with first and second RAs been activated. It can be seen from Fig~\ref{fig_casb} that the received signal samples of the first two RAs are stand-out clearly from the rest, being unaffected by the unknown knowledge of $\pmb{e}$. However, Eve is completely messed up by the confusion of CAS as seen from Fig~\ref{fig_case}, where a large overlap can be found in these four miniplots. Finally, we consider the third row of Fig~\ref{fig_casgas} when GAS is employed. Again, clear patterns are observed at Bob as seen in Fig~\ref{fig_gasb} but indistinguishable clouds are found at Eve as seen in Fig~\ref{fig_gase}.

  \begin{figure}[t]
    \subfloat[$N_a = 1$ CAS (Bob) \label{fig_casb1}]{%
      \includegraphics[width=0.485\linewidth]{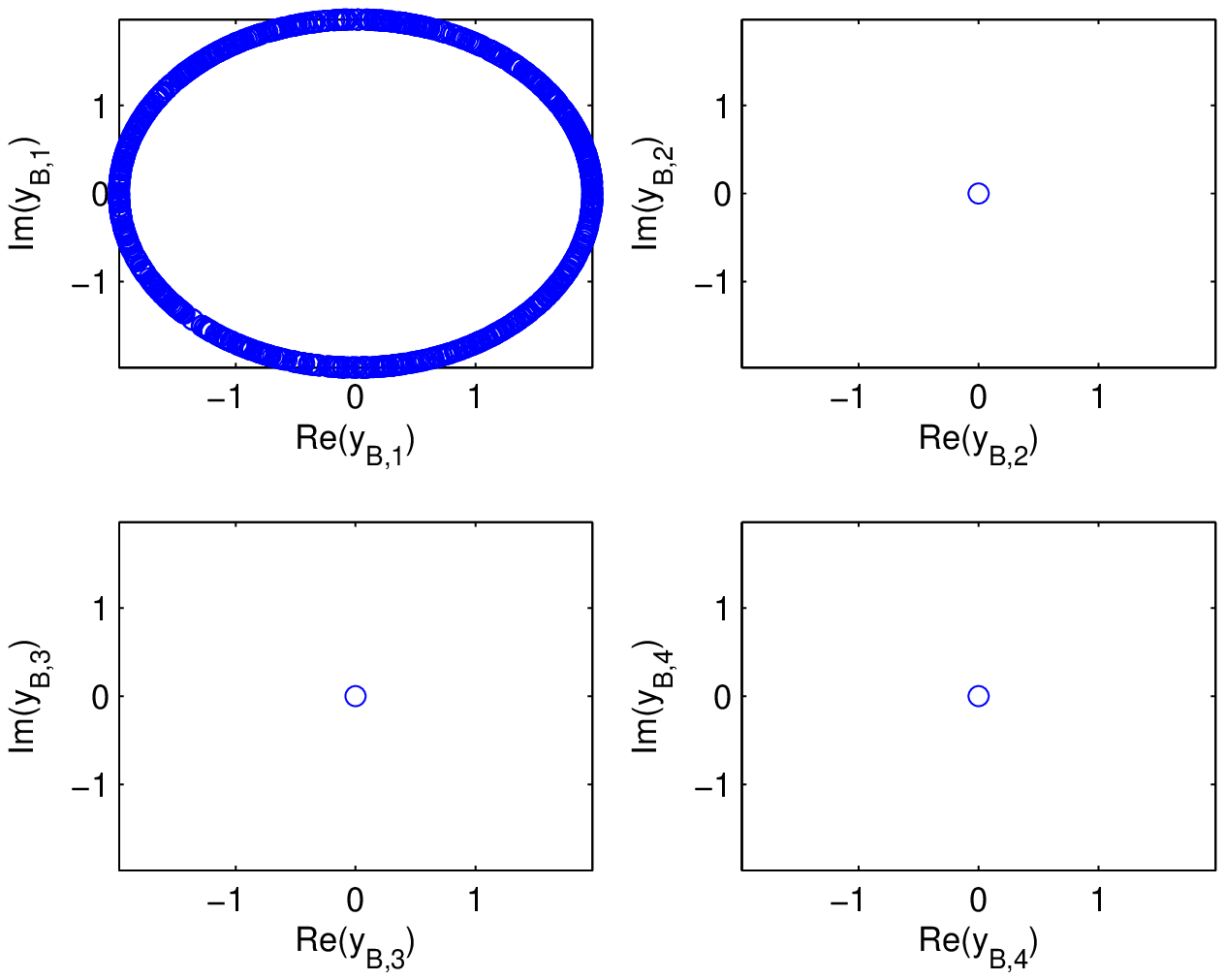}
    }
    \hfill
    \subfloat[$N_a = 1$ CAS (Eve) \label{fig_case1}]{%
      \includegraphics[width=0.485\linewidth]{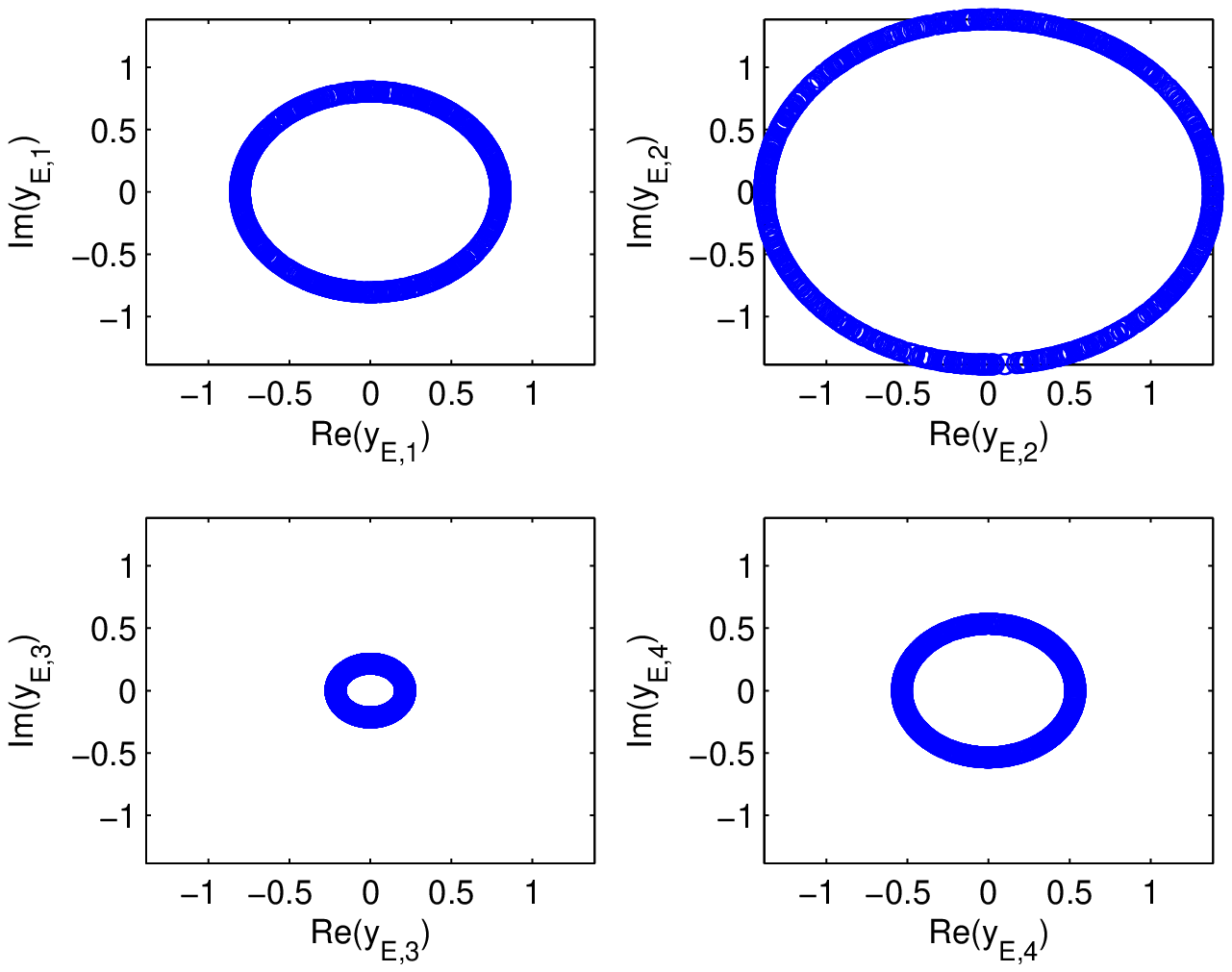}
    }
    \hfill
    \subfloat[$N_a > 1$ CAS (Bob) \label{fig_casb}]{%
      \includegraphics[width=0.485\linewidth]{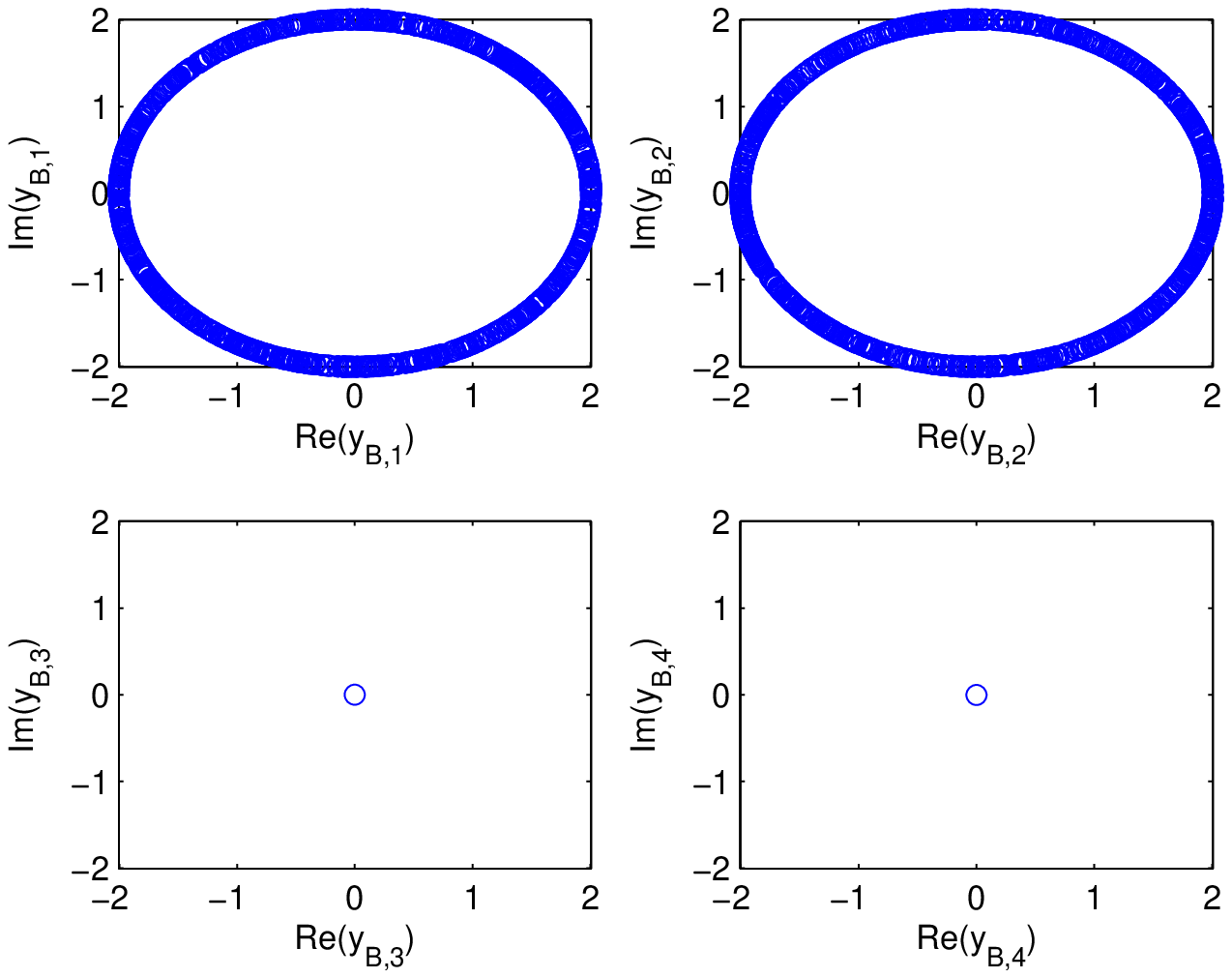}
    }
    \hfill
    \subfloat[$N_a > 1$ CAS (Eve) \label{fig_case}]{%
      \includegraphics[width=0.485\linewidth]{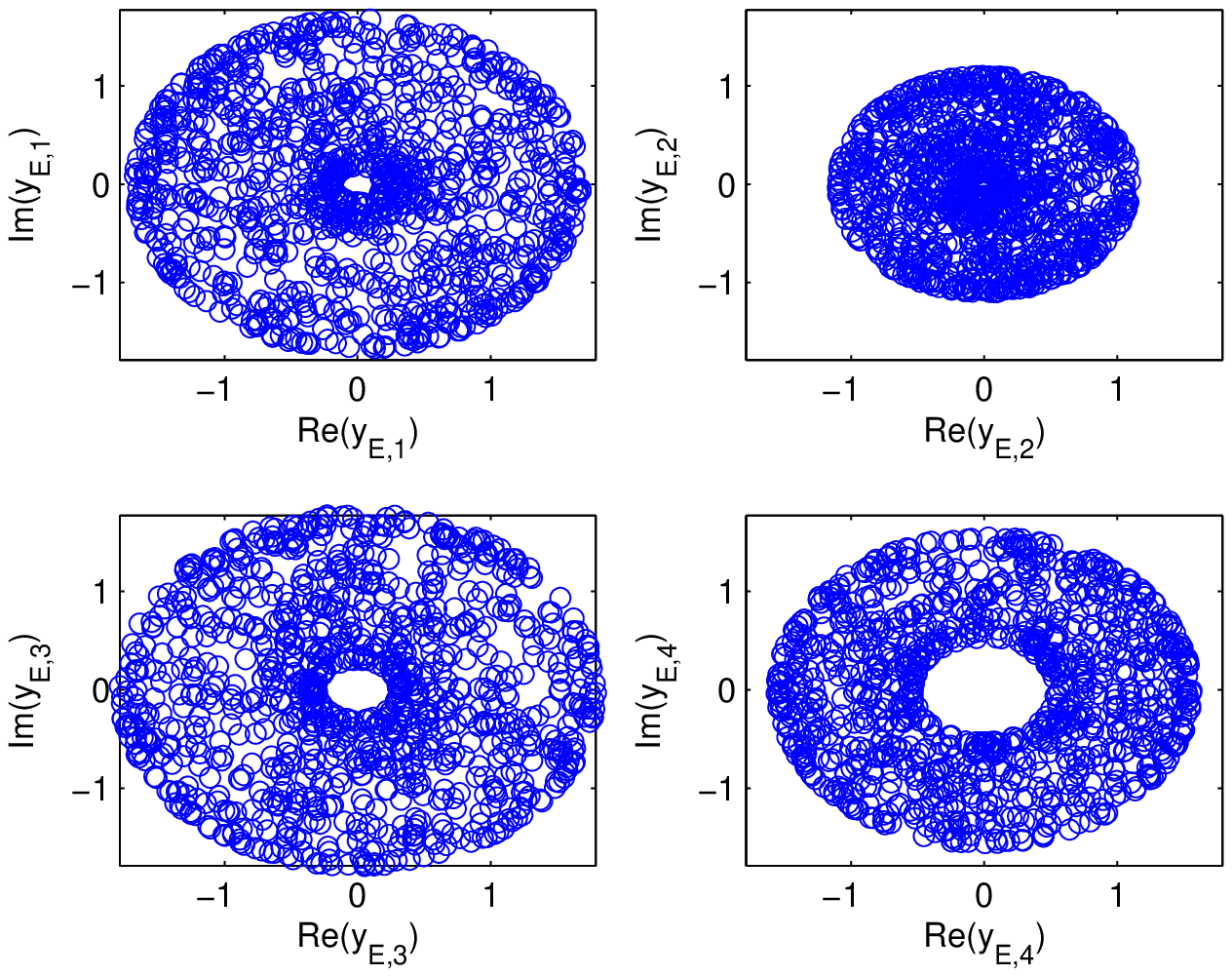}
    }
    \hfill
    \subfloat[GAS (Bob) \label{fig_gasb}]{%
      \includegraphics[width=0.485\linewidth]{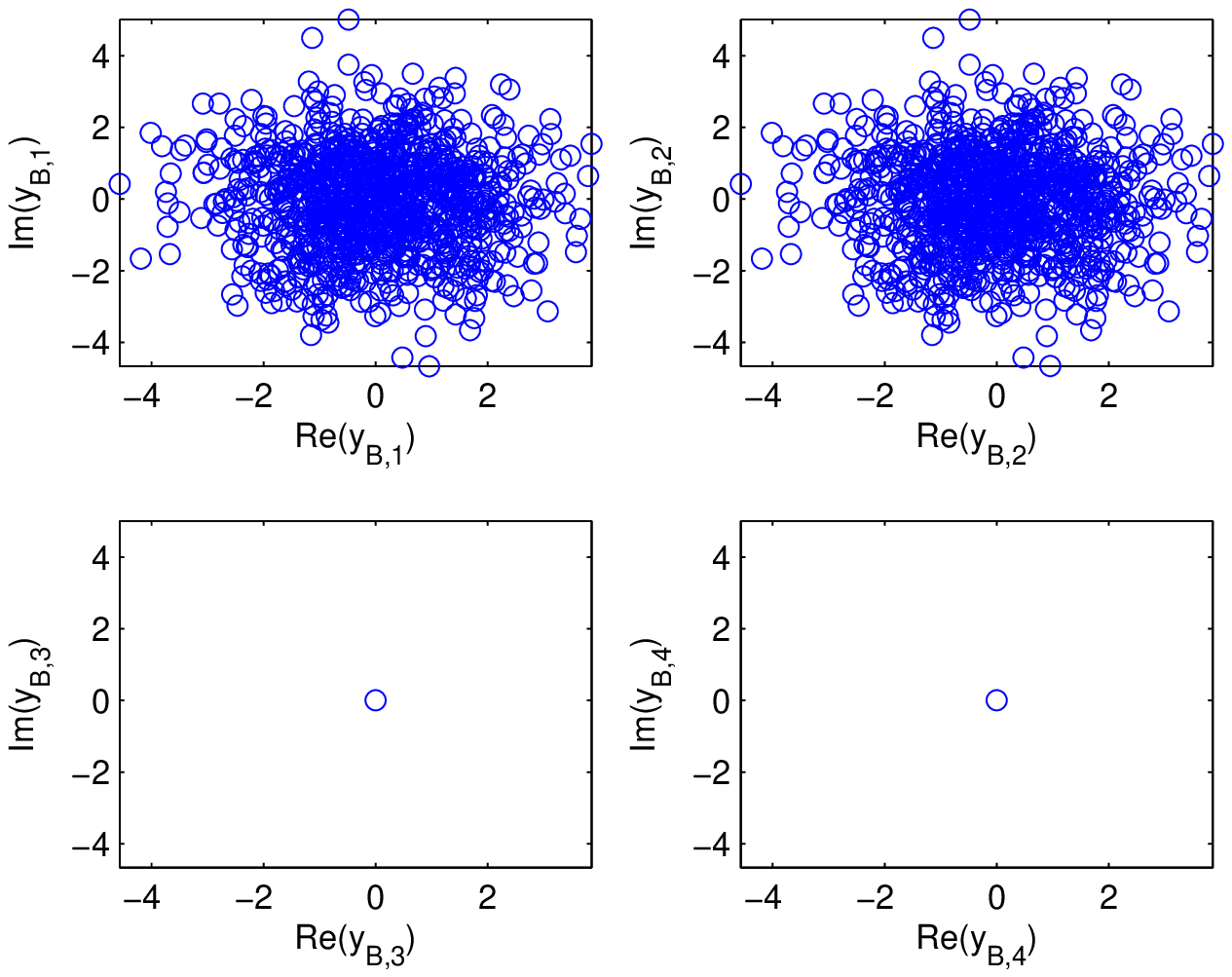}
    }
    \hfill
    \subfloat[GAS (Eve) \label{fig_gase}]{%
      \includegraphics[width=0.485\linewidth]{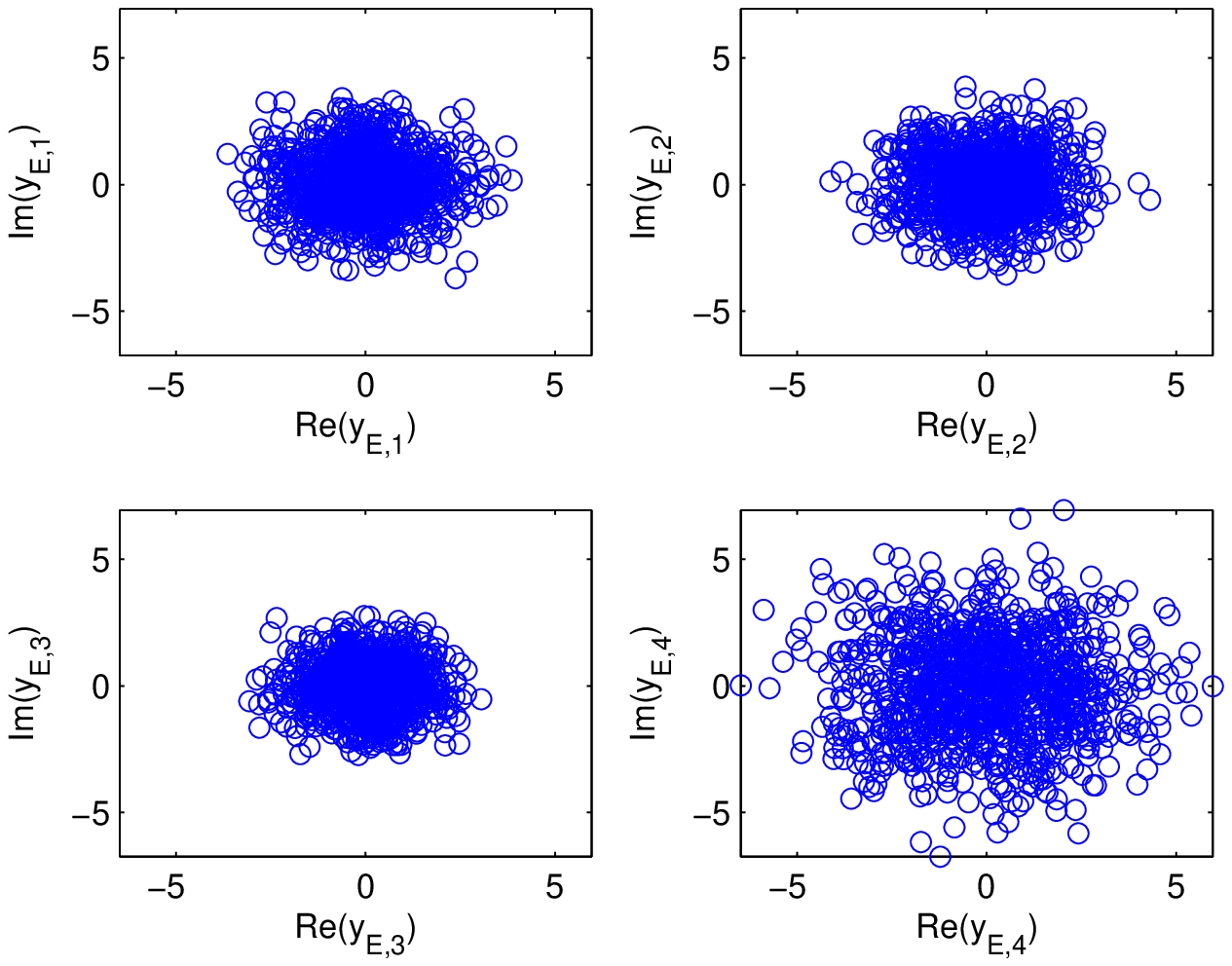}
    }
    \caption{Plot of 1000 samples of the received signal $\pmb{y}_B$ at Bob and $\pmb{y}_E$ at Eve when CI pre-coding aided GPSM is employed at Alice with $[N_t,N_r,N_e] = [8,4,4]$ under a particular MIMO channel realisation of $\pmb{H}_B$ and $\pmb{H}_E$ when $\mathrm{SNR}$ is set to 30dB. Three different cases correspond to three rows of subplots in pair for $\pmb{y}_B$ (left) and $\pmb{y}_E$ (right), respectively. On each subplot, the received signal samples of each individual RA are detailed in four miniplots, each with horizontal axis and vertical axis represent real and imaginary part.}
    \label{fig_casgas}
  \end{figure}

\subsubsection{Remarks}
We infer that the effect of CAS is analogously to introduce artificial phase ambiguity at Eve, while the effect of GAS is analogously to introduce artificial multiplicative noise. It is also worth noting that the proposed antenna scrambling is different from the so-called directional modulation~\cite{4684619,4753998,5159486}, although the useful information is also carried only the antenna level. Without using antenna scrambling as we proposed at the baseband, by varying the antenna near-field electromagnetic boundary in analogy domain, the desired phase and amplitude of the far-filed electromagnetic boundary is modulated at the intended direction, which is indistinguishable to undesired directions.

Finally, for clarity, we summarize the main findings of $\{ \Theta^B_{\epsilon,\tau}, \Theta^E_{\epsilon,\tau} \}$ for calculating the DCMC capacity $\{C_B, C_E\}$ and security capacity $C_S$ of our physical layer security for GPSM scheme with both CAS and GAS in Table~\ref{table}. We also emphasize that our proposed design imposes tough requirement on Eve by enforcing it to equip with $N_e \geq N_t$ antennas in order to start intercepting and even in this challenging setting, substantial positive security capacity can still be observed as shown in next Section.

\section{Numerical Results}
\label{sec_nr}
Let us now illustrate the numerical results of our advocated physical layer security of GPSM scheme with antenna scrambling. Throughout the simulations, we let $\sigma^2_B = \sigma^2_E > 0$ and also $N_e = N_t$ without loss of generosity. As discussed, when $N_e < N_r$, Eve is not capable of intercepting the communications between Alice and Bob, resulting into $C_E = 0$ and hence perfect security may be guaranteed by our design even when $\sigma^2_E \mapsto 0$. Thus, we focus on the challenging scenario of $N_e = N_t$ that was largely neglected in the literature. Note that this is the least number of antennas required for Eve to perform (\ref{eq_eve_post_cas}) and (\ref{eq_eve_g2}).

\subsection{Performance with Antenna Scrambling}
\subsubsection{$N_a = 1$}
The upper subplot of Fig~\ref{fig_CAS_a} shows the security capacity of GPSM scheme with both CAS (solid) and GAS (dash) employing CI based pre-coding when $N_a = 1$ having different combinations of $[N_t,N_r]$ and $N_e = N_t$. It is clear that the security capacity is higher when employing GAS for all different combinations considered, although the higher the ratio of $N_t/N_r$, the lower the difference of security capacity. Also, as expected, along with increasing the ratio of $N_t/N_r$, the security capacity increases in general for both CAS and GAS schemes.

\subsubsection{$N_a > 1$ and CAS}
The lower subplot of Fig~\ref{fig_CAS_a} shows the security capacity of GPSM with CAS employing CI based pre-coding when $[N_t, N_r] = [16,8]$ having different number of activated RAs $N_a = [2,3,4,5,6,7]$ and $N_e = N_t$. Firstly, when including $N_a = 1$ for bench-marker, the lower subplot of Fig~\ref{fig_CAS_a} shows that activating more RAs provides higher and wider security capacity region than opting for larger ratio of $N_t/N_r$ as indicated by the apparent comparisons between solid curves and dashed curves, despite that they are operated on different SNR ranges. Further, we can also see that along with increasing $N_a = [2,3,4]$, the security capacity increases. In particular, the security capacity region of activating $N_a = 2$ and $N_a = 3$ RAs at Bob constitute an approximately horizontal shifted security capacity region of activating $N_a = 6$ and $N_a = 5$ RAs. This is intuitively true since activating $N_a = 6$ ($N_a = 5$) RAs at Bob is conceptually equivalent to deactivating $N_t-N_a = 2$ ($N_t-N_a = 3$) RAs at Bob, where in both cases, we have the maximum security capacity being $C_S = k_{eff} = 4$ bits/s/Hz ($C_S = k_{eff} = 5$ bits/s/Hz). Finally, we observe that the highest security capacity is achieved when $N_a = 4$ with the maximum security capacity being $C_S = k_{eff} = 6$ bits/s/Hz. 

\begin{figure}[t]
\centering
\includegraphics[width=\linewidth]{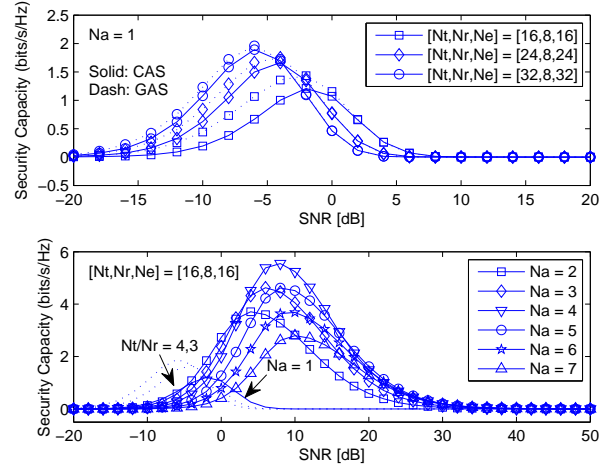}
\caption{(Upper) Security capacity of GPSM scheme with both CAS (solid) and GAS (dash) employing CI based pre-coding when $N_a = 1$ having different combinations of $[N_t,N_r]$ and $N_e = N_t$. (Lower) Security capacity of GPSM with CAS employing CI based pre-coding when $[N_t, N_r] = [16,8]$ having different number of activated RAs $N_a = [2,3,4,5,6,7]$ and $N_e = N_t$.}
\label{fig_CAS_a}
\end{figure}

\subsubsection{$N_a > 1$ and GAS}
Fig~\ref{fig_GAS} shows the security capacity of GPSM with GAS employing CI based pre-coding when $[N_t, N_r] = [16,8]$ (upper) having different number of activated RAs $N_a = [2,3,4,5,6,7]$ and $N_e = N_t$ as well as its security capacity difference with respect to that of CAS (lower). It can be seen from the upper subplot of Fig~\ref{fig_GAS} that the security capacity offered by GAS is also substantial, which is similar to the case of CAS. Again, along with increasing $N_a = [1,2,3,4]$, the security capacity increases and the mirroring effects also appear between $N_a = [1,2,3]$ and $N_a = [7,6,5]$ accordingly. It is thus interesting to investigate the security capacity difference between CAS and GAS, which is shown in the lower subplot of Fig~\ref{fig_GAS}, where we can see that the security capacity differences when $N_a > 1$ alternate around zero. This indicates no dominant superiority between CAS and GAS, where both offer substantial security capacity with different regions of operation under the setting of $N_e = N_t$.

\begin{figure}[t]
\centering
\includegraphics[width=\linewidth]{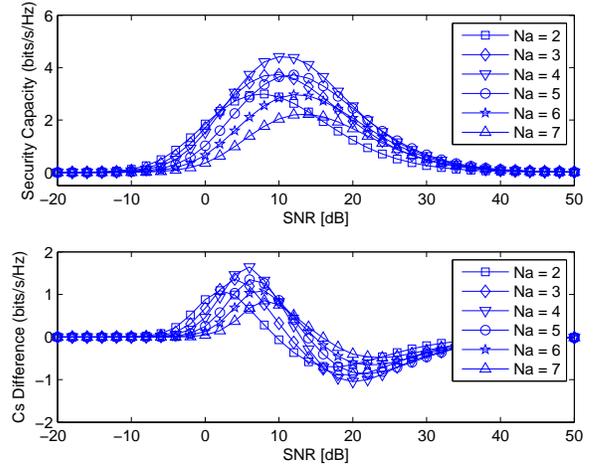}
\caption{Security capacity of GPSM with GAS employing CI based pre-coding when $[N_t, N_r] = [16,8]$ (upper) having different number of activated RAs $N_a = [2,3,4,5,6,7]$ and $N_e = N_t$ and its security capacity difference with respect to that of CAS (lower).}
\label{fig_GAS}
\end{figure}

\subsection{Further Investigations}
\subsubsection{Robustness to CSIT Error}
Like in all pre-coding schemes, an important aspect related to GPSM is its resilience to CSIT inaccuracies. In this paper, we let $\pmb{H}_B = \pmb{H}_{B,a}+\pmb{H}_{B,i}$, where $\pmb{H}_{B,a}$ represents the matrix hosting the average CSI, with each entry obeying the complex Gaussian distribution of $h_{B,a} \sim \mathcal{CN}(0,\sigma^2_{B,a})$ and $\pmb{H}_{B,i}$ is the instantaneous CSI error matrix obeying the complex Gaussian distribution of $h_{B,i} \sim \mathcal{CN}(0,\sigma^2_{B,i})$, where we have $\sigma^2_{B,a}+\sigma^2_{B,i} = 1$. As a result, only $\pmb{H}_{B,a}$ is available at Alice for pre-processing, while we assume perfect $\pmb{H}_{B}$ at Bob and Eve. The upper subplot of Fig~\ref{fig_robust} shows the security capacity of GPSM with GAS employing CI based pre-coding when $[N_t, N_r, N_a] = [16,8,2]$ and $N_e = N_t$ under CSIT error of $\sigma_i = [0.3,0.4,0.5]$. It is clear from the upper subplot of Fig~\ref{fig_robust} that, as expected, the higher the CSIT error imposed, the lower security capacity exhibited. 

\subsubsection{Robustness to Channel Correlations}
Another typical impairment is antenna correlation. The correlated MIMO channels are modelled by the widely-used Kronecker model~\cite{1021913}, which are written as $\pmb{H}_{B} =
(\pmb{R}^{1/2}_{A,t})\pmb{H}_{B,0}(\pmb{R}^{1/2}_{B,r})^T$ and $\pmb{H}_{E} =
(\pmb{R}^{1/2}_{A,t})\pmb{H}_{E,0}(\pmb{R}^{1/2}_{E,r})^T$, with $\pmb{H}_{B,0}$ and $\pmb{H}_{E,0}$ representing the original MIMO channel imposing no correlation, while $\pmb{R}_{A,t}$, $\pmb{R}_{B,r}$ and $\pmb{R}_{E,r}$ represent the correlations at the transmitter of Alice and receiver of Bob and Eve, respectively, with the correlation entries given by $R_{A,t}(i,j) = \rho_{A,t}^{|i-j|}, R_{B,r}(i,j) = \rho_{B,r}^{|i-j|}, R_{E,r}(i,j) = \rho_{E,r}^{|i-j|}$. In this paper, we assume $\rho_{A,t} = \rho_{B,r} = \rho_{E,r} = \rho$. The lower subplot of Fig~\ref{fig_robust} shows the security capacity of GPSM with GAS employing CI based pre-coding when $[N_t, N_r, N_a] = [16,8,2]$ and $N_e = N_t$ under antenna correlations of $\rho = [0.3,0.4,0.5]$. We can see from the lower subplot of Fig~\ref{fig_robust} that the security capacity of GPSM with GAS is not necessarily reduced or improved by increasing the degree of channel correlations. Rather, the positive security capacity region is shifted towards higher SNR range, whilst increasing the value of channel correlations. 

\begin{figure}[t]
\centering
\includegraphics[width=\linewidth]{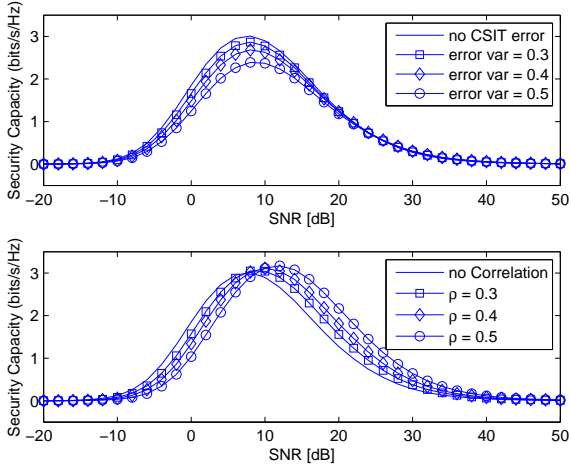}
\caption{Security capacity of GPSM with GAS employing CI based pre-coding when $[N_t, N_r, N_a] = [16,8,2]$ and $N_e = N_t$ under CSIT error of $\sigma_i = [0.3,0.4,0.5]$ (upper) and antenna correlations of $\rho = [0.3,0.4,0.5]$ (lower).}
\label{fig_robust}
\end{figure}

\subsubsection{$N_e > N_t$}
Fig~\ref{fig_large} shows the security capacity (upper) and outage security capacity (lower) when $SNR = 10$dB of GPSM with GAS employing CI based pre-coding when $[N_t, N_r, N_a] = [16,8,2]$ and $N_e = [16,18,20,22,24]$. It is clear from the upper subplot of Fig~\ref{fig_large} that the higher the number of receive antennas $N_e$ at Eve, the lower and narrower positive security capacity exhibited, as Eve's capability is naturally improved. Also, we can see from the lower subplot of Fig~\ref{fig_large} that the higher the number of receive antennas $N_e$ at Eve, the lower the span of positive security capacity values. For example, when $N_e = 22$, the security capacity ranges upto 1 bits/s/Hz, while when $N_e = 16$, the security capacity ranges upto nearly 4 bits/s/Hz. Both figures suggest that our proposed physical layer security of GPSM with antenna scrambling works reasonably well even under $N_e > N_t$.

\begin{figure}[t]
\centering
\includegraphics[width=\linewidth]{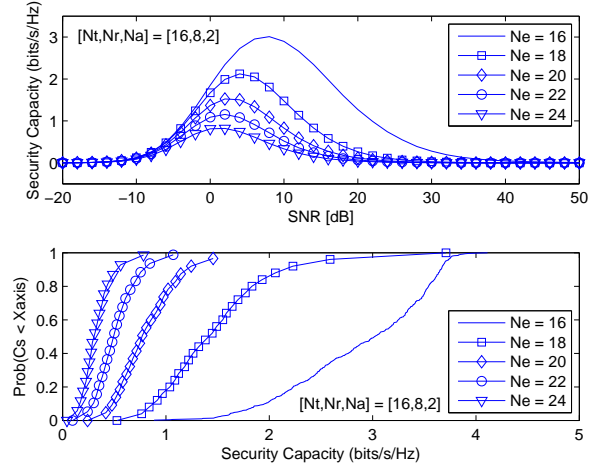}
\caption{Security capacity (upper) and outage security capacity (lower) when $SNR = 10$dB of GPSM with GAS employing CI based pre-coding when $[N_t, N_r, N_a] = [16,8,2]$ and $N_e = [16,18,20,22,24]$.}
\label{fig_large}
\end{figure}

\section{Conclusion}
\label{sec_con}

\section{List of Publications}
~\nocite{6283481,rong1,rong2,7542205,7506307,7437435,7096279,7010910,7180515,6877673,6644231,6685605,6470760,5640682,5692128,5522468,5378540,5641646,5337995,4663889,4804718,4813278,5161292,4451790,4460776,rong3,rong4,rong5,Wang2016,7802594,7542524,7465687,Wang:16,7533478,7552518,7247754,7217842,7468514,7482661,7061488,7217841,7056535,6933944,7124415,7008542,7018092,6670119,6670094,6476610,6363620,Li:13,Zhou:13,Zhou:12,6145718,6226483,6036194,4682680,5336786,Li,7249136,6399185,6214337,6213936,6093299,6093303,5779142,5779355,5779343,5594187,5594577,5594090,5594215,5502467,5493939,5208101,5073827,5073801,4698519,4657193,4533934,4525970,4489118,4224386}

\bibliographystyle{IEEEtran}
\bibliography{ref,rong_pub}

\end{document}